\newtheorem{theorem}{Theorem}[section]
\newtheorem{lemma}[theorem]{Lemma}
\newtheorem{claim}[theorem]{Claim}
\newtheorem{cor}[theorem]{Corollary}
\newtheorem{definition}[theorem]{Definition}
\newtheorem{remark}[theorem]{Remark}
\newcommand\remove[1]{}
\newcommand{\La}{\mathcal{L}}
\newcommand{\mprod}{\bullet}
\DeclareMathOperator{\tra}{tr}
\DeclareMathOperator{\im}{Im}
\begin{document}
\title{Subgraph Sparsification and Nearly Optimal Ultrasparsifiers}
\author{Alexandra Kolla\thanks{School of Mathematics, Institute for Advanced Study. Research supported by NSF Grant CCF-0832797.}
\and Yury Makarychev\thanks{Toyota Technological Institute at
Chicago.} \and Amin Saberi \thanks{Department of Management Science
and Engineering, Stanford University.} \and Shang-Hua Teng
\thanks{Computer Science Department, University of
  Southern California. Supported by NSF grant CCF-0635102.}}

\date{}

\maketitle

\begin{abstract}
We consider a variation of the spectral sparsification problem where
  we are required to keep a subgraph of the original graph.
Formally, given a union of two weighted graphs $G$ and $W$ and an
integer $k$, we are asked to find a $k$-edge weighted graph $W_k$
such that $G+W_k$ is a good spectral sparsifer of $G+W$. We will
refer to this  problem as the subgraph (spectral) sparsification. We
present a nontrivial condition on $G$ and $W$ such that a good
sparsifier exists and give a polynomial time algorithm to find the
sparsifer.

As a significant application of our technique, we show that for each
positive integer $k$, every $n$-vertex weighted graph has an
$(n-1+k)$-edge spectral sparsifier with relative condition number at
most $\frac{n}{k} \log n\, \tilde{O}(\log\log n)$ where $\tilde{O}()$
hides lower order terms. Our bound is within a factor of
$\tilde{O}(\log \log n)$ from optimal. This nearly settles a
question left open by Spielman and Teng about ultrasparsifiers,
which is a key component in their nearly linear-time algorithms for
solving diagonally dominant symmetric linear systems.

We also present another application of our technique to spectral
optimization in which the goal is to maximize the algebraic
connectivity of a graph (e.g. turn it into an expander) with a
limited number of edges.

\end{abstract}
\thispagestyle{empty}
\newpage
\setcounter{page}{1}
\section{Introduction}

Sparsification is an important technique for designing efficient
  graph algorithms, especially for dense graphs.
Informally, a graph $\tilde{G}$ is a sparsifer of $G$ if they are
  similar in a particular measure (which is important to the
  application that one has in mind), and  that $\tilde{G}$ has linear or nearly
  linear number of edges.
Various notions of graph approximation and sparsification have been
  considered in the literature.
For example, Chew's \cite{Che86} spanners (for shortest path
planning) have the
  property that the distance between every pair of vertices in $\tilde{G}$ is approximately the same as in G.
Benczur and Karger's \cite{BK96} cut-sparsifiers (for cuts and
flows)
  have the property that the weight of the boundary of
  every set of vertices is approximately the same in $G$ as in $\tilde{G}$.

In this paper, we will mainly be interested in the spectral notion
of
  graph similarity introduced by Spielman and Teng \cite{ST04},
  \cite{ST08b}: we say that a weighted undirected graph $H$ is a $\kappa$-approximation
  of another $G$ if for all $ x\in \mathbf{R}^V$,
\begin{equation}\label{eq:patch:sparsif}
x^T\La_G x \leq x^T\La_{\tilde{G}} x\leq \kappa x^T\La_G  x
\end{equation}
where for a weighted undirected  graph $G$,  $\La_G$ is the {\em
Laplacian matrix} of $G$ defined as the following:  For each
$\La_G(i,i)$ is equal to the sum of  weights  of all edges incident
to vertex $i$ and for $i \neq j$, $\La_G(i,j) = -w_{i,j}$, where
$w_{i,j}$ is the weight on edge $(i,j)$.

In \cite{ST04,ST08b}, the following spectral sparsification problem
is
  considered.
Given a weighted graph $G = (V,E,w)$, an integer
  $\tilde{m}\leq |E|$, and $\kappa \geq 1$, find a graph
  $\tilde{G} =\{V, \tilde{E}, \tilde{w}\}$ such that $|\tilde{E}| \leq \tilde{m}$
  and $\tilde{G}$ is a $\kappa$-approximation of $G$.
We will refer to
  this problem and its corresponding optimization problem as  the  {\sc
  Spectral Sparsification}.
Spielman and Teng showed that every weighted graph has a nearly
  linear-sized spectral sparsifier and gave a nearly linear-time
  algorithm for computing  such a sparsifier.
Recently, Batson, Spielman, and Srivastava \cite{BSS} gave a
  beautiful, polynomial-time construction to produce a linear-sized
  spectral sparsifier.

In this paper, we introduce a variation of the spectral
sparsification
  problem which we will refer to as the {\sc Subgraph Sparsification.}
In our version, we are given two weighted graphs $G$ and $W$, an
  integer $k$ and $\kappa\geq 1$.
The goal is to find a $k$-edge weighted graph $W_k$ such
  that $(G+W_k)$ is a $\kappa$-approximation of $(G+W)$.
The challenge in the new version of the sparsification problem is
  that we have to respect part of the graph, i.e., $G$, and only
  modify part of graph given in $W$.

As the main technical contribution of the paper, we give a
nontrivial condition about $G$ and $W$ such that a good sparsifier
exists. Our proof critically uses  the intuition of Batson,
Spielman, and  Srivastava~\cite{BSS}, that uses  potential
functions that guide an incremental process for selecting the edges
of the sparisifier. We will refer to that as as the {\em BSS
process}. We have enhanced their approach with new
  understanding about subspace sparsification and spectral
  approximation.

Our challenge, at high level, is the following. The BSS process uses
two carefully chosen barriers (see Section 2) so that at each step,
all eigenvalues can be kept far enough from these
  barriers. They have $\Theta(n)$ edges to select. So they consider the entire
$n$-dimensional space and have step size $\Theta(1/n)$ on these
barriers.

On the other hand, we can only add $k$ edges, where $k$ can be
arbitrarily smaller than $n$. The addition of each edge can only
increase smallest eigenvalue to the second smallest eigenvalue.
Therefore the addition of $k$ edges can only improve the subspace
defined by the $k$ smallest eigenvalue. Now, the critical part of
the argument is that to build a good sparsifier,  we need to ensure
that the addition of the edges does not increase the high spectra by
too much. So in our incremental process, we need to keep track of
two subspaces, a fixed one defined by the $k$ smallest eigenvalues
and a floating one defined by the higher spectra.

We developed an analysis for performing spectral analysis in the
projection of a sequence of two subspaces, which might be
interesting   on its own right. Our analysis also provide a nice
example for using majorization.

Our ability to conduct sparsification on a subgraph enables us to
obtain improved results for a few problems on spectral optimization.
The first application that we consider is the problem of finding
ultrasparsifiers as defined in Spielman and Teng \cite{ST04}. For
parameters $\kappa \geq 1$ and   $k\geq 1$, a weighted undirected
graph $U$ is a {\em $(\kappa,k)$-ultrasparsifier} of another graph
$G$, if $U$ has at most $n-1+k$  edges, and $\La_U \preceq \La_G \preceq
\kappa \cdot \La_U$. Ultrasarsifiers are essential in the application of
the preconditioning   techniques for solving linear systems
\cite{ST04,ST08b}. It has been shown in \cite{ST04} that every
weighted undirected graph   $G$ has a $(\frac{n}{k} \log ^{O(1)}
n,k)$ ultrasparsifiers, for  any $k$. As a significant application of
our subgraph sparsification technique, we show that for every
  positive integer $k$, every $n$-vertex weighted graph has a $(\frac{n}{k}\log
n\,\tilde{O}(\log \log n),k)$-ultrasparsifier. Our bound is within a
factor of  $\tilde{O}(\log \log)$ from    the optimal. This new
result nearly settles a question about ultrasparsifiers left open
by Spielman and Teng.

At high level, our solution to ultrasparsification is quite
  simple, once we have our subgraph sparsification result.
Given a weighted graph$G$, we first construct a low-stretch spanning
  tree \cite{AKPW,EEST,ABN08} $T$ of $G$.
We then apply an elegant result of Spielman and Woo \cite{SW} which
states that the sum of the relative condition numbers of $\La_G$ and
$\La_T$ is equal to the total stretch to embed $G$ onto $T$. We will
also use Spielman--Woo's tail distribution bound on the number of
relative
  eigenvalues of $\La_G$ and $\La_T$ that are larger than a given
  parameter.

Algorithmically, we start with the best available \cite{ABN08}
low-stretch spanning tree $T$ of $G$
  whose total stretch is $n\log n\,\tilde{O}(\log \log n)$.
We then
  consider the subgraph sparsification problem defined by $T$
  and $W = \frac{k}{n\log n \tilde{O}(\log \log n )} G$.
We apply the structure theorem of Spielman and Woo \cite{SW} to show
that
  $(T,W)$ satisfy our condition for subgraph sparsification and
  apply our result to show that there exists  a $k$-edge weighted graph
  $W_k$ whose edges are in $W$ such that $T+W_k$ is a spectral
  approximation of $T+W$.
It is then not hard to prove that $T+W_k$ is an
  a $(\frac{n}{k}\log n\,\tilde{O}(\log \log n),k)$-ultrasparsifier.

As another application of our technique on subgraph sparsification,
we consider the following spectral optimization problem studied in
\cite{BG}: Given a graph $G$ and a parameter $k$, we are asked to
find $k$ edges amongst a set of candidate edges
  to add to $G$ so as to maximize its algebraic connectivity.
Algebraic connectivity has emerged as an important parameter for
measuring the robustness and stability of a network and is an
essential factor in the performance of various search, routing and
information diffusion algorithms.

The spectral optimization considered in this paper is known to be
NP-hard~\cite{damon} and no approximation guarantee for it was known
prior to our work. We give an SDP-based approximation algorithm for
the problem. Our techniques for subgraph sparsification enable us to
develop a novel \textit{rounding} scheme in order to find a
combinatorial solution. Since the integrality gap of the SDP is
unbounded, our analysis involves adding a separate upper bound,
which is roughly the $k$-th largest eigenvalue of the Laplacian of
$G$ to approximate the optimum solution.
%

\section{Preliminaries}
\textbf{Matrix Notation and Definitions.} We denote the Laplacian of a graph $G$ by $\La_G$.
For brevity, we write ${G_1} \preceq G_2$ to denote~$\La_{G_1} \preceq \La_{G_2}$.
For an $n\times n$ matrix $A$, let $\lambda_{\mathrm{min}}(A) \equiv \lambda_1(A) \leq \lambda_2(A) \leq \dots \leq \lambda_n(A) \equiv \lambda_{\mathrm{max}}(A)$
be the set of eigenvalues in the increasing order.
Let $A^\dagger$ be the
pseudoinverse of $A$. If $A$ is symmetric, $A^\dagger$ is also symmetric and
$AA^\dagger = A^\dagger A = P_{\im(A)}$, where  $P_{\im(A)}$ is the orthogonal projection on $\im(A)$.
Let $A\mprod B \equiv \tra A^TB$ be the Frobenius product
of matrices $A$ and $B$. We define the \textit{condition number} of a non-singular matrix $A$ as
$\kappa=\|A\|\|A^{-1}\|$,
which is equal to $\lambda_{\mathrm{max}}(A)/\lambda_{\mathrm{min}}(A)$ if $A$ is a (symmetric) positive
definite matrix. For positive definite
matrices $A$, $B$ with $\im A = \im B$, we define the relative condition number as
$$\kappa(A,B) =\max_{x\notin\ker B} \frac{x^T A x}{x^T B x}  \cdot \max_{x\notin\ker A} \frac{x^TBx}{x^TAx}.$$

\paragraph{Ultrasparsifiers.} We say that a graph is $k$--ultra-sparse if it has at most
$n-1+k$ edges. We note that a spanning tree is $0$--ultra-sparse. A ($\kappa,k)$ \textit{ultra-sparsifier} of a graph
$G = (V,E,w)$ is a $k$--ultra-sparse subgraph of $G$ such that
 $U \preceq G \preceq \kappa \cdot U$~\cite{ST04}.

\section{Matrix Sparsifiers} \label{sec:patch:patch}
In this section, we prove an analog of the
sparsification theorem of Batson, Spielman, and Srivastava~\cite{BSS}.
\begin{definition}\label{def:patch}(\textbf{Graph Patch})
Let $G$ be a (weighted) graph. A graph $W$ on the vertices of $G$ is a $(k,T,\lambda^*)$-patch
for $G$ if the following properties hold\footnote{we have $\lambda_{k+1}
(\La_G \La_{G+W}^{\dagger}) =
\lambda_{k+1}((\La_{G+W}^{\dagger})^{1/2}\La_G(\La_{G+W}^{\dagger})^{1/2})$,
since $\lambda_i(AB) = \lambda_i(BA)$ for every two square matrices $A$ and $B$},
\begin{enumerate}
\item $\lambda_{k+1}(\La_G \La_{G+W}^{\dagger}) \equiv
\lambda_{k+1}((\La_{G+W}^{\dagger})^{1/2}\La_G(\La_{G+W}^{\dagger})^{1/2})
\geq \lambda^*$;
\item $\tra(\La_W \La_{G+W}^{\dagger}) \leq T$.
\end{enumerate}
\end{definition}

We prove that for every patch, there exists a ``patch sparsifier'' supported on
$O(k)$ edges. Specifically, we prove the following theorem.

\begin{claim}\label{cl:patch}
Let $W=(V, E_W, \{w_e\}_{e\in E_W})$ be a $(k, T,\lambda^*)$-patch for $G$ with edge weights $w_e$
and $N \geq 8k$. Then there is a weighted graph $W_k = (V, E_{W_k}, \{\tilde w_e\}_{e\in E_{W_k}})$
with edge weights $\tilde w_e$ such that
\begin{enumerate}
\item $W_k$ has at most $N$ edges; $E_{W_k} \subseteq E_W$.
\item
$c_1 \min(N/T,1)  \lambda^* \La_{G+W} \preceq \La_{G+W_k} \preceq c_2 \La_{G+W}$,
for some absolute constants $c_1$ and $c_2$.
\item The total weight of edges, $\sum_{e\in E_{W_k}} \tilde w_k$, is at most $\min(1, N/T) \sum_{e\in E_W} w_e$.
\end{enumerate}
We say that $W_k$ is a patch sparsifier  of $W$ with respect to $G$.
\end{claim}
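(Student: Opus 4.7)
The plan is to recast the claim in matrix–vector form and then run a BSS-style barrier process. Conjugating by $(\La_{G+W}^{\dagger})^{1/2}$, the problem becomes: given a PSD matrix $A$ on $\im(\La_{G+W})$ with $\lambda_{k+1}(A) \geq \lambda^*$, together with rank-one matrices $v_e v_e^T$ (one per edge $e$ of $W$) satisfying $A + \sum_e v_e v_e^T = P$ (the orthogonal projection onto $\im(\La_{G+W})$) and $\tra(P - A) \leq T$, find non-negative scalars $s_e$ with at most $N$ of them nonzero so that
\[
c_1 \min(N/T, 1)\, \lambda^* P \;\preceq\; A + \sum_e s_e v_e v_e^T \;\preceq\; c_2 P.
\]
Undoing the normalization then recovers parts~(1) and~(2) of the claim, and the total-weight bound in part~(3) follows from tracing both sides of the upper inequality and comparing against $\sum_e \|v_e\|^2 = \tra(P - A) \leq T$.

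I would then adapt the BSS incremental process, starting from $M_0 = A$ rather than from the zero matrix. Maintain an upper barrier $u_t$ and a lower barrier $\ell_t$, with the usual potentials $\Phi^u_t = \tra((u_t P - M_t)^{\dagger})$ and $\Phi^\ell_t = \tra((M_t - \ell_t P)^{\dagger})$, initialized with $u_0 = O(1)$ and $\ell_0$ very small but positive. At each of $N$ iterations, pick an edge $e$ and a weight $s > 0$, set $M_{t+1} = M_t + s v_e v_e^T$, and advance both barriers by fixed increments $\delta_u, \delta_\ell$ chosen so that neither potential rises. The step sizes scale like $\Theta(1/N)$ rather than the usual $\Theta(1/n)$ of BSS. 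The existence of a valid pair $(e, s)$ at each step reduces, as in BSS, to an averaging inequality: summed over edges, the $s$ permitted by the lower-barrier constraint exceeds the $s$ forced by the upper-barrier constraint, which one bounds using $\sum_e v_e v_e^T = P - A$ and the trace bound $\tra(P - A) \leq T$. The factor $\min(N/T, 1)\lambda^*$ in the final lower bound enters through the balancing between $N$ and $T$ in this averaging step.

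The main obstacle, and where genuinely new analysis beyond BSS is required, is the two-subspace structure: $A$ has $k$ eigenvalues that may be arbitrarily close to zero and $n-k$ eigenvalues of size at least $\lambda^*$. Decomposing $P = P_k + P_{-k}$ along the bottom-$k$ and top $(n-k)$ eigenspaces of $A$, the initial matrix already satisfies $A \succeq \lambda^* P_{-k}$, so the lower barrier on $V_k^\perp = \im P_{-k}$ is handled essentially for free. On $V_k = \im P_k$, however, the lower barrier must be pushed up purely by the edges selected by the process; since $N \geq 8k$ there is enough rank available, but the vectors $v_e$ need not align with $V_k$, and the ordinary BSS potentials do not directly see the split. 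The technical heart of the proof is a majorization argument — the "spectral analysis in the projection of a sequence of two subspaces" flagged in the introduction — showing that the bottom-$k$ eigenvalues of $M_t$ can be advanced at rate $\Omega(\min(N/T, 1)\lambda^*)$ per step while $\lambda_{\max}(M_t)$ stays at $O(1)$. Once this is in hand, iterating $N$ times produces $M_N$ with $\ell_N P \preceq M_N \preceq u_N P$, $\ell_N = \Omega(\min(N/T, 1)\lambda^*)$, and $u_N = O(1)$, which is exactly the required spectral inequality.
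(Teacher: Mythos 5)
Your reduction step --- conjugating by $(\La_{G+W}^{\dagger})^{1/2}$ so that the edge matrices sum to $P-A$ with $\tra(P-A)\le T$ and $\lambda_{k+1}(A)\ge\lambda^*$ --- is exactly how the paper derives the claim from Theorem~\ref{thm:mainmatrix}. The gap is everything after that: the process you describe runs the \emph{ordinary} full-trace BSS potentials $\tra\bigl((u_tP-M_t)^{\dagger}\bigr)$ and $\tra\bigl((M_t-\ell_tP)^{\dagger}\bigr)$, and with those the averaging step you appeal to does not go through. To reach a lower barrier of $\Omega(\min(N/T,1)\lambda^*)$ in $N$ steps you need increments $\delta_\ell=\Theta(1/\max(N,T))$, but the $n-k$ eigendirections of $A$ with eigenvalue at least $\lambda^*$ now enter the lower potential and its increments, and the needed inequality $\sum_e L\mprod(v_ev_e^T)\ge 1/\delta_\ell-\epsilon_L$ degrades with $n$; once $n\gg\max(N,T)$ (the regime relevant for ultrasparsifiers) an admissible edge/weight pair need not exist --- this is precisely the obstruction the paper flags in its proof overview (also, $\ell_0$ must be negative, since $\lambda_{\min}(A)$ can be arbitrarily close to $0$). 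The paper's actual mechanism, which your sketch defers to an unspecified majorization argument, is different: the lower potential is defined only on the \emph{fixed} $k$-dimensional bottom eigenspace $S$ of $X$ and applied to the normalized added part $B^{(q)}=Z(A^{(q)}-X)Z$ with $Z=((P_S(M^*-X)P_S)^{\dagger})^{1/2}$, so that the conjugated edge matrices sum to the identity on $S$; the upper potential is restricted to the top $T$ eigenvalues of the current (moving) matrix, and majorization (Lemma~\ref{Majorization}, Corollary~\ref{cor:Majorization}) is what legitimizes that moving restriction in the upper-barrier shift --- it is not what advances the bottom-$k$ eigenvalues. Moreover, even granting both barriers, the process only yields $\sum_i w_iY_i\succeq\theta_{\mathrm{min}}(M^*-X)$ on $S$ together with $\lambda_{\mathrm{max}}(M_N)=O(1)$; since $S$ is not an invariant subspace of $M_N$, concluding $\lambda_{\mathrm{min}}(M_N)\ge c\,\min(N/T,1)\lambda^*\lambda_{\mathrm{min}}(M^*)$ requires the explicit balancing argument with the norm induced by $M_N$ at the end of Section~\ref{sec:thmmainproof}; your sentence ``iterating $N$ times produces $\ell_NP\preceq M_N$'' asserts exactly this missing step, and it is where the product $\min(N/T,1)\cdot\lambda^*$ actually comes from.

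Part~(3) is also not obtained the way you propose: tracing the upper spectral inequality bounds $\sum_e s_e\|v_e\|^2$, but $\|v_e\|^2=w_e\,\tra(\La_e\La_{G+W}^{\dagger})$ is not proportional to $w_e$, so no bound on the actual total weight $\sum_e s_ew_e$ follows. In the paper the weight bound is enforced \emph{inside} the selection step: each edge carries $cost_e=w_e/\sum_d w_d$, Lemma~\ref{both_barriers} guarantees $cost_i\cdot t\le 1/\max(N,T)$ at every iteration, and summing over the $N$ iterations gives $\sum_e\tilde w_e\le\min(1,N/T)\sum_e w_e$.
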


The claim will follow immediately from the following theorem, which is
is of independent interest. We will also show another (related) application
of this theorem in Section~\ref{sec:patch:algcon}.

\begin{theorem}
\label{thm:mainmatrix}
Suppose we are given a positive definite $n\times n$ matrix $X$ and a sequence of matrices $Y_i = v_i v_i^T$ ($i=1,\dots, m$) with
$$X + \sum_{i=1}^m Y_i = M^*,$$
and $\lambda_{\mathrm{max}}(M^*) \leq 1$.
Additionaly, suppose each matrix  $Y_i$ has cost $cost_i$ and $\sum_{i=1}^m cost_i = 1$.
Let $\lambda^* = \lambda_{k+1}(X)$, and $T= \lceil \tra (M^*-X)\rceil $. Then for every $N > 8k$ there exists a set of weights $w_i$ with $|\{w_i: w_i \neq 0\}| = N$ such that the matrix
$M = X + \sum_{i= 1}^m w_i Y_i$ satisfies,
$$
c_1\min(N/T,1) \cdot \lambda^* \cdot\lambda_{\mathrm{min}}(M^*)
\leq
\lambda_{\mathrm{min}}(M)
\leq
\lambda_{\mathrm{max}}(M)
\leq c_2,
$$
where $c_1$ and $c_2$ are some absolute constants,
and $\sum_{i=1}^m w_i cost_i \leq \min(1, N/T)$.
\end{theorem}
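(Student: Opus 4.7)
My approach adapts the barrier potential method of Batson--Spielman--Srivastava (BSS) to a setting where the incremental process begins with a nonzero matrix $X$ and is allowed only $N$ rank-one updates, where $N$ may be much smaller than $n$. The key new idea is that only the top $n-k$ eigenvalues of the running matrix can be actively lifted by $N$ such updates; the bottom $k$ are ``shielded'' by $X$ itself (whose $(k+1)$-st eigenvalue is already $\lambda^*$), so the lower barrier will only track the top part of the spectrum. A majorization argument reconciles the fact that this top subspace is floating.

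First I normalize the problem by conjugating with $(M^*)^{-1/2}$, reducing to the case $X + \sum_i Y_i = I$; this absorbs $\lambda_{\min}(M^*)$ and the bound $\lambda_{\max}(M^*)\leq 1$ into constant factors in the final conclusion. I then run an incremental process in which at step $t$ the running matrix is $A_t = X + \sum_{i\in S_t} w_{t,i} Y_i$, controlled by two barriers $l_t < \lambda_{k+1}(A_t)$ and $u_t > \lambda_{\max}(A_t)$ and associated potentials
\begin{align*}
\Phi^u_t &= \tra\bigl((u_t I - A_t)^{-1}\bigr),\\
\Phi^l_t &= \tra\bigl(Q_t (A_t - l_t I)^{-1} Q_t\bigr),
\end{align*}
where $Q_t$ is the orthogonal projection onto the span of the top $n-k$ eigenvectors of $A_t$. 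The upper barrier and $\Phi^u$ are as in BSS. The lower barrier $l_t$, however, only certifies $\lambda_{k+1}(A_t)\geq l_t$, since one cannot lift more than $n-k$ eigenvalues using $n-k$ rank-one additions.

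At each step I show how to choose an index $i_t$ and weight $w_t$ so that (a) both barriers advance by prescribed increments $\delta_u,\delta_l$, (b) neither potential increases, and (c) the cost consumed is $O(1/N)$ of the total budget. Existence is obtained by an averaging argument over $i$: using $\sum_i Y_i = I - X$, the trace estimate $\tra(I-X)\leq T$, and the normalization $\sum_i cost_i = 1$, the sum over $i$ of the ``required'' weight is bounded by the sum of the ``allowed'' weight, so some $i$ is admissible. Choosing $\delta_u,\delta_l$ with ratio $\Theta(\min(N/T,1)\lambda^*)$ and iterating $N$ times yields $u_N = O(1)$ and $l_N = \Omega(\min(N/T,1)\lambda^*)$, which, together with positive semidefiniteness of all updates, gives the claimed two-sided spectral bound on $M$ after undoing the normalization; the bound $\sum w_i cost_i \leq \min(1,N/T)$ follows from (c).

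The main technical obstacle is that the projection $Q_t$ is \emph{floating}: the top $n-k$ eigenvectors of $A_t$ may rotate when a rank-one update is applied. I handle this by a majorization/interlacing argument. After a psd rank-one update $A_{t+1} = A_t + w v v^T$, Cauchy interlacing for the principal submatrix $Q_t(A_{t+1} - l_{t+1}I)Q_t$ implies that its eigenvalues dominate, index by index, the top $n-k$ eigenvalues of $A_{t+1}-l_{t+1}I$; hence the genuine potential $\Phi^l_{t+1}$ is upper bounded by the fixed-subspace quantity $\tra\bigl(Q_t(A_{t+1}-l_{t+1}I)^{-1}Q_t\bigr)$, which evolves under the standard Sherman--Morrison identity and can be analyzed exactly as in BSS. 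This isolates the novel multi-subspace aspect into a single majorization step and lets the remaining bookkeeping proceed along the classical lines.
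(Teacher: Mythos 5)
Your plan inverts the roles that the fixed and floating subspaces have to play, and as a result the quantity your lower barrier certifies does not imply the theorem. Since $\lambda^*=\lambda_{k+1}(X)$, it is the \emph{top} $n-k$ eigenvalues that are already ``shielded'' by $X$ (indeed $A_t\succeq X$ gives $\lambda_{k+1}(A_t)\geq\lambda^*$ for free), while the \emph{bottom} $k$ eigenvalues of $X$ may be arbitrarily tiny and are exactly the ones the $N$ updates must lift. Your lower potential $\tra\bigl(Q_t(A_t-l_tI)^{-1}Q_t\bigr)$, with $Q_t$ the top $n-k$ eigenprojection of $A_t$, never sees those bottom $k$ directions, so nothing in the averaging argument forces the process to place weight there; at the end you only know $\lambda_{k+1}(M)\geq l_N$, which gives no lower bound on $\lambda_{\min}(M)$ (take $X$ with $k$ eigenvalues equal to $10^{-100}$: your invariants are consistent with $\lambda_{\min}(M)\approx 10^{-100}$). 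The paper does the opposite: the lower potential lives on the \emph{fixed} subspace $S$ spanned by the $k$ smallest eigenvectors of $X$, applied to the normalized added mass $B^{(q)}=Z(A^{(q)}-X)Z$, and the factor $\lambda^*\lambda_{\min}(M^*)$ in the conclusion is not produced by the barrier increments at all (your ``ratio $\Theta(\min(N/T,1)\lambda^*)$'' has no supporting averaging estimate); it comes from a separate final step combining the $S$-bound, the bound $\lambda^*$ on $S^{\perp}$, and the upper bound on $\lambda_{\max}$ via a triangle-inequality argument in the $A^{(N)}$-norm.

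Two further steps would fail as written. First, you keep the full BSS upper potential $\tra\bigl((u_tI-A_t)^{-1}\bigr)$; but with $u_0=O(1)$ (needed since $u_N\leq c_2$) its initial value is at least $\Omega(n)$, which through the condition relating $\epsilon_U,\epsilon_L,\delta_U,\delta_L$ forces $\delta_L=O(1/n)$, so $N\ll n$ steps cannot move the lower barrier by $\Omega(\min(N/T,1))$. This is precisely why the paper truncates the upper potential to the top $T$ eigenvalues of the (floating) subspace $L(A)$ and proves the Karamata/majorization inequality $\Phi^u(PAP)\leq\Phi^u(A)$ for $uI-A\succeq 0$ to control that floating subspace, together with the von Neumann trace inequality to keep the averaging bound at $1/\delta_U+\epsilon_U$. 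Second, your majorization step for the floating \emph{lower} subspace goes the wrong way: Cauchy interlacing says eigenvalues of the compression $Q_t(A_{t+1}-l_{t+1}I)Q_t$ are \emph{dominated by} the top eigenvalues of $A_{t+1}-l_{t+1}I$, not the reverse, and moreover $A_{t+1}-l_{t+1}I$ is indefinite in your setup (its bottom $k$ eigenvalues may be below $l_{t+1}$), so $\tra\bigl(Q_t(A_{t+1}-l_{t+1}I)^{-1}Q_t\bigr)$ mixes terms of both signs and does not upper bound $\Phi^l_{t+1}$. The paper avoids this entirely by keeping the lower subspace fixed, which is what makes the Sherman--Morrison (pseudoinverse) bookkeeping legitimate there.
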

\textbf{Proof Overview.}
Our proof closely follows the approach of Batson, Spielman, and Srivastava~\cite{BSS}.
We construct matrix $M$ in $N$ steps; at each step we choose an index $i$ and weight $w_i$ and add $w_i Y_i$ to the sum $X + \sum_{i=1}^m w_i Y_i$.
Recall that Batson, Spielman, and Srivastava define two ``barriers'' $l$ and $u$ and maintain the property
that all eigenvalues of $M$ lie between $l$ and $u$. At each step, they increase $l$ and $u$ and update
matrix $M$ so that this property still holds. Finally, the ratio between $u$ and $l$ becomes very
close to $1$, which means that $\lambda_{\mathrm{min}}(M)$ is very close to $\lambda_{\mathrm{max}}(M)$.
During this process, they keep track not only of the smallest
and largest eigenvalues of $M$ but of all $n$ eigenvalues  to avoid accumulation of eigenvalues in
neighborhoods of $l$ and $u$. To this end, they define two potential functions, the
lower potential function $\Phi_l(M) = \sum_{i=1}^n \frac{1}{\lambda_i(M) - l}$ and the upper potential function
$\Phi^u(M) = \sum_{i=1}^n \frac{1}{u - \lambda_i(M)}$, and then ensure that $\Phi_l(M)$ and $\Phi^u(M)$
do not increase over time. That guarantees that all eigenvalues of $M$ stay far away from $l$ and $u$.

In our proof, however, we cannot keep an eye on all eigenvalues. After each step, only one eigenvalue increases,
and thus we need $\theta(n)$ steps to increase all eigenvalues participating in the definition of
$\Phi_l(M)$. But our goal is to ``patch'' $X$ in roughly $k$ steps. So we focus our attention only on $k$ smallest and $T$ largest eigenvalues.

Let $S$ be the eigenspace of $X$ corresponding to $k$ smallest eigenvalues, and $P_S$ be the projection onto $S$.
We define the lower potential function as follows,
$$\Phi_l(A) = \tra(P_S(A-lI)P_S)^{\dagger} = \sum_{i=1}^k \frac{1}{\lambda_i( \left.A\right|_S ) - l},$$
where $\left.A\right|_S$ denotes the restriction of $A$ to the space $S$ ($\left.A\right|_S$ is a $k\times k$ matrix).
Note that the space $S$ is fixed, and the eigenvector corresponding
to the smallest eigenvalue will not necessarily lie in $S$ after a few steps.
We want to ensure that after $N$ steps,
$$\sum_{i=1}^m w_i \bigl.Y_i\bigr|_S \succeq c \min(N/T,1) \sum_{i=1}^m \bigl.Y_i\bigr|_S = c \min(N/T,1) \bigl.(M^*-X)\bigr|_S,$$
or in other words,
$\lambda_{\mathrm{min}}(\left.(Z(\sum_{i=1}^m w_iY_i)Z)\right|_S) \geq c \min(N/T,1)$, where
$Z = {\left((P_S(M^* - X)P_S)^{\dagger}\right)}^{1/2}$.
To this end, we show how to update $M$ and $l$ so that $\Phi_l(Z(\sum_{i=1}^m w_i Y_i)Z)$ does not increase, and $l$
equals $c \min(N/T,1)$ after $N$ steps.
It remains to lower bound $\lambda_{\mathrm{min}}(M)$ in the entire space.
We know that all eigenvalues of $X$ (and therefore, of $M$)
 in $S^{\perp}$ are at least $\lambda^*$.
We show that that together with an upper bound on $\lambda_{\mathrm{max}}(M)$ implies
that
$\lambda_{\mathrm{min}}(M) \geq c_1\min(N/T,1) \cdot \lambda^*\lambda_{\mathrm{min}}(M^*)$
(the product of the lower bounds on $\lambda_{\mathrm{min}}$
in spaces $S$ and $S^{\perp}$ divided by the upper bound on $\lambda_{\mathrm{max}}$).

Similarly, we amend the definition of the upper potential function. Since we need to bound $\lambda_{\mathrm{max}}$ in the entire
space, we cannot restrict $\Phi^u(M)$ to a fixed subspace. For a matrix $A$, we consider the eigenspace of $A$ corresponding
to its largest $T$ eigenvalues. Denote it by $L_A(A)$; denote the projection onto $L(A)$ by $P_{L(A)}$. Then
$$
\Phi^u(A) = \tra(P_{L(A)}(uI-A)^{-1}P_{L(A)}) = \tra(P_{L(A)}(uI - A)P_{L(A)})^{\dagger}
= \sum_{i=n-T + 1}^N \frac{1}{u - \lambda_i(A)}.
$$
Note that both definitions of $\Phi^u(A)$ --- in terms of regular inverse and in terms of pseudoinverse --- are equivalent since $L(A)$ is an invariant subspace
of $A$. However, $\Phi_l(A)$ is not equal to $\tra(P_{S}(A - lI)^{-1}P_{S})$ in general since $S$ is not necessarily an invariant subspace of $A$.

Our algorithm and analysis are similar to those of Batson, Spielman, and Srivastava~\cite{BSS}. However, several complications arise because we are controlling eigenvalues in different
subspaces and, moreover, one of these subspaces, $L(A)$, is not fixed.

Let us summarize the proof.
We construct the matrix $M$ iteratively in $N$ steps.
Let $A^{(q)}$ be the matrix and $w_i^{(q)}$ be the weights after $q$ steps.
We define an auxiliary matrix $B^{(q)}$ as $Z(A^{(q)}-X)Z$. We have,
$$
A^{(q)} = X + \sum_i w_i^{(q)} Y_i;\quad
B^{(q)} = \sum_i w_i^{(q)} ZY_iZ = Z(A^{(q)}-X)Z.
$$
We will ensure that the following properties hold after each step
(for some values of constants $l_0$, $\delta_L$, $u_0$, $\delta_U$, $\epsilon_L$, $\epsilon_U$, which we will specify later).
\begin{enumerate}
\item $\Phi_{l_0}(B^{(0)}) \leq \epsilon_L$ and $\Phi^{u_0}(A^{(0)}) \leq \epsilon_U$.
\item Each matrix $A^{(q)}$ and $B^{(q)}$ is obtained by a rank-one update of the previous one:
$$
A^{(q+1)} =A^{(q)}+t Y_i,\quad
B^{(q+1)} =B^{(q)}+t ZY_iZ
$$
for some $i$.
\item Lower and upper potentials do not increase. Namely, for every $q =0,1,\dots,N$,
$$
\Phi^{u_0+(q+1)\delta_U}(A^{(q+1)}) \leq \Phi^{u_0 + q\delta_U}(A^{(q)})\leq \epsilon_U
\text{ and }
\Phi_{l+(q+1)\delta_L}(B^{(q+1)}) \leq \Phi_{l_0 + q\delta_L}(B^{(q)})\leq \epsilon_L.
$$

\item  At each step $q$,
$\lambda_{\mathrm{min}}(\bigl.B^{(q)}\bigr|_S) > l \equiv l_0 + q \delta_L$ and
$\lambda_{\mathrm{max}}(A^{(q)}) < u \equiv u_0 + q \delta_U$.
In particular, this condition ensures
that all terms in the definitions of upper and lower potentials are positive.
\item At each step $q$, the total cost is at at most $q/\max (N,T)$: $\sum w_i^{(q)} cost_i \leq q/\max (N,T)$.
\end{enumerate}

We present the complete proof in Sections~\ref{sec:barshift}
and \ref{sec:thmmainproof}.
In Section~\ref{sec:barshift}, we first find conditions under which we can update $A^{(q)}$ and $u$ (Lemma~\ref{upper_potential}), and $B^{(q)}$ and $l$ (Lemma~\ref{lower_potential}).
Then we show that both conditions can be simultaneously satisfied (Lemma~\ref{both_barriers}).
In Section~\ref{sec:basic}, we prove several theorems
that we need later to deal with a non-fixed subspace $L(A)$.
Finally, in Section~\ref{sec:thmmainproof}, we combine all pieces of the proof together.

\subsection{Some Basic Facts about Matrices}
\label{sec:basic}
\subsubsection{Sherman--Morrison Formula}
We use the Sherman--Morrison Formula, which describes the behavior of the inverse
of a matrix under rank-one updates. We first state the formula for regular inverse~\cite{GV96},
and then we show that a similar expression holds for the pseudoinverse.

\begin{lemma}[Sherman--Morrison Formula]\label{lem:path:sminv}
If $A$ is a nonsingular $n\times n$ matrix and $Y = vv^T$ is a rank-one update, then
\begin{equation}\nonumber
(A+Y)^{-1} =A^{-1} - \frac{A^{-1} Y A^{-1}}{1+A^{-1}\mprod Y}
\end{equation}
\end{lemma}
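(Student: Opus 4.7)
The plan is a direct algebraic verification: right-multiply the proposed formula by $A+Y$ and check that the product equals the identity. This works precisely because $Y = vv^T$ has rank one, so the quantity $v^T A^{-1} v$ is a scalar that can be factored out of matrix products; no analogous collapse would be available for higher-rank updates.

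Concretely, let $\alpha = v^T A^{-1} v$; when $A$ is symmetric (as in the Laplacian setting in which we will apply this), this coincides with $A^{-1}\mprod Y$. The pivotal identity is the rank-one collapse
\[
Y A^{-1} Y \;=\; vv^T A^{-1} vv^T \;=\; (v^T A^{-1} v)\,vv^T \;=\; \alpha\, Y,
\]
which is the only nontrivial computation in the proof. Plugging this into the product,
\begin{align*}
(A + Y)\Bigl(A^{-1} - \tfrac{A^{-1} Y A^{-1}}{1+\alpha}\Bigr)
 &= I + Y A^{-1} - \tfrac{Y A^{-1} + Y A^{-1} Y A^{-1}}{1+\alpha} \\
 &= I + Y A^{-1} - \tfrac{(1+\alpha)\, Y A^{-1}}{1+\alpha} \;=\; I.
\end{align*}
An analogous computation on the left side verifies that the expression is a two-sided inverse, which proves the formula.

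There is no genuine obstacle in the lemma itself: the formula is classical, and the denominator $1+\alpha$ is automatically nonzero under the hypothesis that $A$ is nonsingular and $A+Y$ is invertible, since the matrix-determinant lemma yields $\det(A+Y) = (1+\alpha)\det A$. The only point requiring a little care is this well-definedness check. The harder companion result, mentioned right after the statement, is the analogous identity for the \emph{pseudoinverse}, where one must also track how $\im(A)$ changes under the rank-one update $vv^T$ (in particular whether $v$ lies in $\im(A)$); that extension will need a separate argument and is where the actual work lies, but it is outside the scope of the statement being proved here.
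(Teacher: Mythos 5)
Your verification is correct: the rank-one collapse $Y A^{-1} Y = (v^T A^{-1} v)\,Y$ together with the matrix-determinant-lemma check that $1 + v^T A^{-1} v \neq 0$ is all that is needed, and in fact $v^T(A^{-1})^T v = v^T A^{-1}v$ as scalars, so the identification of $\alpha$ with $A^{-1}\mprod Y$ needs no symmetry assumption. The paper gives no proof of this lemma (it simply cites Golub--Van Loan) and establishes only the pseudoinverse analogue, Lemma~\ref{lem:path:smpseudinv}, by exactly this kind of direct multiplication, so your argument matches the intended approach.
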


\begin{lemma}\label{lem:path:smpseudinv}
If $A$ is a symmetric (possibly singular) $n\times n$ matrix, $Y = vv^T$ is a rank-one update, then
$$
(A+PYP)^{\dagger} =A^{\dagger} - \frac{A^{\dagger} Y A^{\dagger}}{1+A^{\dagger}\mprod Y},
$$
where $P$ is the orthogonal projection on $\im(A)$.
\end{lemma}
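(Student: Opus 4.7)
The plan is to reduce the pseudoinverse identity to the classical Sherman--Morrison formula of Lemma~\ref{lem:path:sminv} by restricting everything to the subspace $\im(A)$. Because $A$ is symmetric, we have the orthogonal decomposition $\mathbb{R}^n = \im(A) \oplus \ker(A)$, with $P$ acting as the identity on $\im(A)$ and as zero on $\ker(A)$, and $A^\dagger$ acting as $(A|_{\im(A)})^{-1}$ on $\im(A)$ and as zero on $\ker(A)$. In particular $P A^\dagger = A^\dagger P = A^\dagger$, which immediately gives the two identities
\[
A^\dagger Y A^\dagger = A^\dagger (PYP) A^\dagger, \qquad A^\dagger \mprod Y = A^\dagger \mprod (PYP),
\]
so the right-hand side of the claim is unchanged if we replace $Y$ by $PYP = (Pv)(Pv)^T$.

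Next, I would observe that $PYP$ preserves the decomposition $\im(A) \oplus \ker(A)$: it maps $\im(A)$ into itself and sends $\ker(A)$ to zero. Consequently $A + PYP$ is block-diagonal with respect to this decomposition, equal to the nonsingular (provided $1 + A^\dagger \mprod Y \neq 0$) operator $A|_{\im(A)} + (Pv)(Pv)^T$ on $\im(A)$ and to zero on $\ker(A)$. Therefore $\ker(A + PYP) \supseteq \ker(A)$, and the pseudoinverse $(A+PYP)^\dagger$ is itself block-diagonal: it vanishes on $\ker(A)$ and coincides with the ordinary inverse of the restriction on $\im(A)$.

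Now I would apply Lemma~\ref{lem:path:sminv} inside $\im(A)$, with the nonsingular matrix $A|_{\im(A)}$ and the rank-one update $(Pv)(Pv)^T$, to obtain
\[
\bigl(A|_{\im(A)} + (Pv)(Pv)^T\bigr)^{-1}
= (A|_{\im(A)})^{-1} - \frac{(A|_{\im(A)})^{-1} (Pv)(Pv)^T (A|_{\im(A)})^{-1}}{1+(A|_{\im(A)})^{-1} \mprod (Pv)(Pv)^T}.
\]
Extending both sides by zero on $\ker(A)$ converts $(A|_{\im(A)})^{-1}$ into $A^\dagger$ and, after invoking the identities from the first paragraph to replace $(Pv)(Pv)^T$ by $Y$ in the numerator and denominator, yields exactly the claimed formula.

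There is no real obstacle here: the only thing to verify carefully is that pseudoinversion is compatible with the $\im(A) \oplus \ker(A)$ block decomposition of $A + PYP$, which is precisely where the projection $P$ on the left-hand side does its work (without it, the update $Y$ could introduce components in $\ker(A)$ and the statement would fail). Once this reduction to the invertible case is in place, the proof is a two-line application of the classical formula.
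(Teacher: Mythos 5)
Your proof is correct, but it takes a genuinely different route from the paper. The paper verifies the formula directly: writing $\bar Y = PYP = (Pv)(Pv)^T$, it checks by an explicit Sherman--Morrison-style computation (using $AA^\dagger = A^\dagger A = P$, $P\bar Y P = \bar Y$, and $\bar Y A^\dagger \bar Y = (A^\dagger \mprod \bar Y)\,\bar Y$) that the candidate matrix $A^{\dagger} - A^{\dagger}\bar Y A^{\dagger}/(1+A^{\dagger}\mprod \bar Y)$ multiplies $A+\bar Y$ on either side to give $P$, i.e.\ it redoes the classical argument with pseudoinverse identities. You instead reduce to the invertible case: using the orthogonal decomposition $\mathbb{R}^n = \im(A)\oplus\ker(A)$ (valid since $A$ is symmetric), you observe that $A+PYP$ is block-diagonal, apply Lemma~\ref{lem:path:sminv} to the nonsingular block $A|_{\im(A)}+(Pv)(Pv)^T$, and extend by zero, using $PA^\dagger = A^\dagger P = A^\dagger$ to rewrite the numerator and denominator in terms of $Y$. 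Your route buys some conceptual clarity: it makes explicit why the identity $(A+\bar Y)X = X(A+\bar Y) = P$ characterizes the Moore--Penrose pseudoinverse (block-diagonality with respect to an orthogonal decomposition, with both factors vanishing on $\ker(A)$), a point the paper leaves implicit, and it isolates the nondegeneracy condition $1+A^\dagger\mprod Y\neq 0$ (equivalent, by the matrix determinant lemma, to nonsingularity of the updated block), which both statements tacitly assume just as the classical Lemma~\ref{lem:path:sminv} does. The paper's computation, in exchange, is self-contained at the level of matrix identities and does not require passing to operators on the subspace $\im(A)$; your appeal to Sherman--Morrison "inside $\im(A)$" is legitimate but implicitly uses that the formula holds for operators on any finite-dimensional inner-product space (equivalently, after choosing an orthonormal basis of $\im(A)$), which is worth one explicit sentence.
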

\begin{proof}
Let $\bar v = Pv$ and $\bar Y = PYP = \bar v \bar v^T$ .
Note that $A^{\dagger} Y A^{\dagger} = A^{\dagger} \bar Y A^{\dagger}$, since $P A^{\dagger} = P$, and
$$A^{\dagger}\mprod \bar Y = \tra A^{\dagger} \bar Y = \tra A^{\dagger} (PYP) =
\tra (PA^{\dagger}P) Y = A^{\dagger}\mprod  Y.$$
We need to verify that
$$
(A+\bar Y)\left(A^{\dagger} - \frac{A^{\dagger}\bar Y A^{\dagger}}{1+A^{\dagger}\mprod \bar Y}\right) =\left(A^{\dagger} - \frac{A^{\dagger}\bar Y A^{\dagger}}{1+ A^{\dagger}\mprod
\bar Y}\right)(A+\bar Y)
= P.
$$
Since $A$ is a symmetric matrix, $AA^{\dagger} = A^{\dagger}A = P$. Since $P^2 = P$, $P\bar Y P = \bar Y$ and
$\bar Y A^{\dagger} \bar Y = \bar v \bar v^T A \bar v \bar v^T
= \bar v (A \mprod \bar Y) \bar v^T = (A \mprod \bar Y) \bar Y$.
We calculate,
\begin{align*}
(A+\bar Y)\left(A^{\dagger} - \frac{A^{\dagger} \bar Y A^{\dagger}}{1+ A^{\dagger} \mprod \bar Y}\right) &=
AA^\dagger+ \bar Y A^\dagger -\frac{\bar P}{Y A^\dagger+ \bar Y A^\dagger \bar Y A^\dagger}{1+ A^\dagger \mprod \bar Y}\\
&= P + \bar Y A^\dagger -\frac{(1+ A^\dagger\mprod \bar Y )\bar Y A^\dagger}{1
+ A^\dagger\mprod \bar Y}=P + \bar Y A^\dagger - \bar Y A^\dagger=P.
\end{align*}
Similarly,
$$(A^{\dagger} - \frac{A^{\dagger} \bar Y A^{\dagger}}{1+A^{\dagger} \mprod \bar Y})(A+\bar Y) = P.$$
\end{proof}
\subsubsection{Majorization}
\label{sec:basic}
\begin{lemma}\label{Majorization}(Majorization)
For every positive semidefinite matrix $A$, every projection matrix $P$, and every $r\in\{1,\dots,n\}$
\begin{equation}
\label{eq:majorization}
\sum_{i= n - r + 1}^n \lambda_i(A) \geq \sum_{i= n - r + 1}^n
\lambda_i(PAP).
\end{equation}
In particular, $\lambda_{\mathrm{max}}(A) \geq \lambda_{\mathrm{max}}(PAP)$.
\end{lemma}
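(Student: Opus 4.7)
The plan is to apply the Ky Fan maximum principle: for any symmetric matrix $M$ and $1 \le r \le n$,
\[
\sum_{i=n-r+1}^n \lambda_i(M) \;=\; \max\bigl\{\tra(MZ) : Z=Z^T,\ 0\preceq Z \preceq I,\ \tra Z \le r\bigr\},
\]
with the maximum attained when $Z$ is the orthogonal projection onto the span of the top $r$ eigenvectors of $M$. (Writing $\tra Z \le r$ rather than $\tra Z = r$ is harmless here because $A$ and $PAP$ are PSD, so enlarging $Z$ can only increase $\tra(\cdot\, Z)$.) The key idea is to take the extremal $Z$ on the $PAP$ side and conjugate it by $P$ to produce a feasible (no longer extremal) test matrix on the $A$ side.

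Concretely, let $\Pi$ be the orthogonal projection onto the span of the top $r$ eigenvectors of $PAP$, so that $\sum_{i=n-r+1}^n \lambda_i(PAP) = \tra(PAP\cdot \Pi)$. Cyclicity of the trace together with $P^T=P$ gives
\[
\tra(PAP\,\Pi) \;=\; \tra\bigl(A\cdot P\Pi P\bigr).
\]
Set $Z := P\Pi P$ and verify the three Ky Fan constraints for $A$: $Z$ is symmetric; for any vector $v$,
\[
v^T Z v \;=\; (Pv)^T\Pi(Pv) \;\in\; [0,\|Pv\|^2] \;\subseteq\; [0,\|v\|^2],
\]
so $0\preceq Z\preceq I$; and
\[
\tra Z \;=\; \tra(P^2\Pi) \;=\; \tra(P\Pi) \;=\; \tra(\Pi) - \tra((I-P)\Pi) \;\le\; \tra(\Pi) \;=\; r,
\]
where the inequality uses that $I-P$ and $\Pi$ are both PSD, so the trace of their product is nonnegative.

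Finally, the Ky Fan principle applied to $A$ with this $Z$ yields $\tra(AZ) \le \sum_{i=n-r+1}^n \lambda_i(A)$, which combined with the identity above gives~\eqref{eq:majorization}; the case $r=1$ specializes to $\lambda_{\max}(PAP) \le \lambda_{\max}(A)$. I do not expect a real obstacle here: the whole argument is driven by the single observation that conjugation by the orthogonal projection $P$ maps the extremizer for $PAP$ into a feasible (suboptimal) Ky Fan test matrix for $A$. The only place that requires any care is the trace bound on $Z$, where $P \preceq I$ (i.e., $P$ is an \emph{orthogonal} projection) is used essentially.
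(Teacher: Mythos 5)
Your proof is correct, but it takes a different route from the paper. The paper argues directly and elementarily: it expands each eigenvector $\tilde e_i$ of $PAP$ in an eigenbasis of $A$, observes (using $P\tilde e_i=\tilde e_i$ when $\lambda_i(PAP)\neq 0$) that each $\lambda_i(PAP)$ is a weighted average $\sum_j \langle e_j,\tilde e_i\rangle^2\lambda_j(A)$, and then sums over the top $r$ indices, noting the weights total $r$ and each weight is at most $1$. You instead invoke the Ky Fan variational characterization of the top-$r$ eigenvalue sum and conjugate the extremizer $\Pi$ for $PAP$ by $P$, checking that $Z=P\Pi P$ is a feasible test matrix for $A$ ($0\preceq Z\preceq I$, $\tra Z\le r$). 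In effect your argument reduces the lemma to the paper's own Lemma~\ref{lem:patrial_trace} (the von Neumann/Ky Fan trace bound) applied with the pair $(P\Pi P, A)$, whereas the paper proves the majorization lemma from scratch and proves Lemma~\ref{lem:patrial_trace} separately. Your version is shorter and more modular once the variational principle is granted; the paper's is self-contained and makes the weight-counting explicit. The two delicate points in your write-up are handled correctly: the relaxation $\tra Z\le r$ is indeed harmless because $A$ and $PAP$ are positive semidefinite, and $\tra\bigl((I-P)\Pi\bigr)\ge 0$ because the trace of a product of two positive semidefinite matrices is nonnegative; the essential use of $P$ being an \emph{orthogonal} projection ($P\preceq I$) mirrors the paper's use of $\sum_i\langle e_j,\tilde e_i\rangle^2\le 1$.
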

\begin{proof}
Let $e_1, \dots, e_n$ be an orthonormal eigenbasis of $A$ so that $e_i$ has
eigenvalue $\lambda_i(A)$. Similarly, let $\tilde e_1, \dots, \tilde e_n$ be an orthonormal eigenbasis
of $PAP$ so that $\tilde e_i$ has eigenvalue $\lambda_i(PAP)$.
Write
$$\tilde e_i = \sum_{j=1}^n \langle e_j, \tilde e_i\rangle e_j.$$
Note that if $\lambda_i(PAP) \neq 0$ then
$\tilde e_i \in \im(P A P) \subseteq \mathrm{Im}(P)$ and $P \tilde e_i = \tilde e_i$.
Then
$$\lambda_i(PAP) = \tilde e_i^T PAP \tilde e_i
= \tilde e_i A \tilde e_i
= \sum_{j=1}^n \langle e_j, \tilde e_i\rangle^2 \lambda_j(A).$$
If $\lambda_i(PAP) = 0$ then trivially
$$\lambda_i(PAP) = 0 \leq \sum_{j=1}^n \langle e_j, \tilde e_i\rangle^2 \lambda_j(A).$$
Therefore,
$$\sum_{i=n - r + 1}^n \lambda_i(PAP) \leq
\sum_{i=n - r + 1}^n \sum_{j=1}^n \langle e_j, \tilde e_i\rangle^2 \lambda_j(A)=
\sum_{j=1}^n \left(\sum_{i=n - r + 1}^n \langle e_j, \tilde e_i\rangle^2\right) \lambda_j(A).
$$
That is, $\sum_{i=n - r + 1}^n \lambda_j(PAP)$ is at most
the sum of  $\lambda_j(A)$ with weights $\sum_{i=n - r + 1}^n \langle e_j, \tilde
e_i\rangle^2$.
The total weight of all $\lambda_1(A),\dots,\lambda_n(A)$ is $r$:
$$\sum_{i=n - r + 1}^n\underbrace{\sum_{j=1}^n \langle e_j, \tilde e_i\rangle^2}_{
\|\tilde e_i\|^2} =
\sum_{i=n - r + 1}^n \|\tilde e_i\|^2 = r.$$
The weight of each eigenvalue $\lambda_j(A)$ in the sum is at most $1$:
$$\sum_{i=n - r + 1}^n \langle e_j, \tilde e_i\rangle^2 \leq \sum_{i=1}^n \langle e_j,
\tilde e_i\rangle^2 = 1.$$
Therefore, the sum does not exceed the sum of the $r$ largest eigenvalues
$\sum_{i=n-r+1}^n \lambda_r(A)$.
\end{proof}

\begin{cor}\label{cor:Majorization}
For every positive semidefinite matrix $A$, every projection matrix $P$ and $u > \lambda_{\mathrm{max}}(A)$, the following inequality holds.
\begin{equation}
\Phi^{u}(PAP) =
\sum_{i=n-T + 1}^{n}\frac{1}{u-\lambda_i(PAP)}
\leq \sum_{i=n-T+1}^{n}\frac{1}{u-\lambda_i(A)} =\Phi^u(A)
\end{equation}
\end{cor}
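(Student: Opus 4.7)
The corollary is a one-step consequence of Lemma~\ref{Majorization} once one observes that the scalar map $f(x) = 1/(u - x)$ is convex and strictly increasing on $(-\infty, u)$. The plan is to cast both sides of the inequality as $\sum f(\cdot)$ over the top $T$ eigenvalues, recognise the eigenvalue inequality from Lemma~\ref{Majorization} as the weak-majorization hypothesis of Hardy--Littlewood--P\'olya, and conclude.

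Concretely, I would set $\alpha_j = \lambda_{n-j+1}(PAP)$ and $\beta_j = \lambda_{n-j+1}(A)$ for $j = 1,\dots,T$, so that $\alpha_1 \geq \alpha_2 \geq \cdots \geq \alpha_T$ and $\beta_1 \geq \beta_2 \geq \cdots \geq \beta_T$ are the top $T$ eigenvalues of $PAP$ and $A$ respectively. Applying Lemma~\ref{Majorization} with $r = 1, 2, \dots, T$ in place of the parameter there gives
$$\sum_{j=1}^{r} \alpha_j \;\leq\; \sum_{j=1}^{r} \beta_j \qquad (r = 1,\dots,T),$$
i.e.\ the sequence $(\beta_j)$ weakly majorizes $(\alpha_j)$. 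Moreover, the special case $r=1$ gives $\alpha_1 = \lambda_{\max}(PAP) \leq \lambda_{\max}(A) < u$, so all the $\alpha_j$ and $\beta_j$ lie safely in the interval $(-\infty, u)$ on which $f$ is convex and increasing. In this notation the corollary is exactly $\sum_{j=1}^{T} f(\alpha_j) \leq \sum_{j=1}^{T} f(\beta_j)$.

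The final step is to invoke the standard Hardy--Littlewood--P\'olya (Karamata-type) inequality: if two sequences in decreasing order satisfy the weak-majorization inequality above and $f$ is convex and nondecreasing on an interval containing all entries, then $\sum f(\alpha_j) \leq \sum f(\beta_j)$. Substituting back the original indexing yields the claimed inequality $\Phi^u(PAP) \leq \Phi^u(A)$.

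The only mildly delicate point is the appeal to Hardy--Littlewood--P\'olya for \emph{weak} (rather than ordinary) majorization; this is where monotonicity of $f$ is used in addition to convexity. If a self-contained derivation is desired, one can obtain it in a few lines by Abel summation: writing $f(\alpha_j) = f(\alpha_T) + \sum_{i=j}^{T-1}\bigl(f(\alpha_i) - f(\alpha_{i+1})\bigr)$ and summing in $j$, then doing the same for $\beta_j$, reduces the claim to comparing partial sums weighted by the nonnegative, decreasing (by convexity of $f$) increments $f(\alpha_i) - f(\alpha_{i+1})$ and $f(\beta_i) - f(\beta_{i+1})$, which is exactly what the weak-majorization hypothesis and monotonicity of $f$ deliver. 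No other step presents any obstacle.
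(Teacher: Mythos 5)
Your proposal is correct and follows essentially the same route as the paper: apply Lemma~\ref{Majorization} to get the partial-sum (weak majorization) inequalities for the top eigenvalues of $PAP$ and $A$, and then invoke the Karamata/Hardy--Littlewood--P\'olya inequality for the convex increasing function $f(x)=1/(u-x)$. Your extra care in checking that all eigenvalues lie below $u$ and the optional Abel-summation derivation are fine additions but do not change the argument.
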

\begin{proof}
The statement follows from the Karamata Majorization Inequality. The inequality
claims that for every two non-increasing sequences
that satisfy (\ref{eq:majorization}) and for every increasing convex function $f$,
$$
\sum_{i=n-k+1}^n f(\lambda_i(A))
\geq
\sum_{i=n-k+1}^n f(\lambda_i(PAP)).
$$
Plugging in $f(x) = \frac{1}{u-x}$ (defined on $(0,u)$), we obtain the desired inequality.
\end{proof}

\begin{lemma}\label{lem:patrial_trace}
Let $A$ be a  positive semidefinite matrix such that  $A \preceq I_n$. Assume $\mathrm{Tr}(A) \leq r \in \mathbb N$. Then for every positive semidefinite matrix $M$,
$
A\mprod M \leq \sum_{i=N-r+1}^{N} \lambda_i(M)
$.
\end{lemma}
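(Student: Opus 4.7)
The plan is to reduce the inequality to a linear optimization over a simple polytope by diagonalizing $M$ and reading off the diagonal entries of $A$ in that basis. Let $M = \sum_{i=1}^{n} \lambda_i(M)\, e_i e_i^T$ be the spectral decomposition of $M$ with eigenvalues in increasing order, and let $c_i := e_i^T A e_i$. Since $A \bullet M = \tra(AM)$ and trace is basis independent, I first compute
\[
A \bullet M = \sum_{i=1}^n (e_i^T A e_i)\, \lambda_i(M) = \sum_{i=1}^n c_i\, \lambda_i(M).
\]

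Next I would check the constraints on the coefficients $c_i$. The hypothesis $0 \preceq A \preceq I_n$ gives $0 \leq e_i^T A e_i \leq e_i^T e_i = 1$, so $c_i \in [0,1]$ for each $i$. The hypothesis $\tra(A) \leq r$ translates (in the $e_i$-basis) into $\sum_{i=1}^n c_i = \tra(A) \leq r$. Thus $(c_1,\dots,c_n)$ lies in the polytope $\{c \in [0,1]^n : \sum_i c_i \leq r\}$.

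It therefore suffices to show that for every such vector $c$ and for nonnegative reals $\lambda_1 \leq \cdots \leq \lambda_n$,
\[
\sum_{i=1}^n c_i \lambda_i \;\leq\; \sum_{i=n-r+1}^n \lambda_i.
\]
This is a rearrangement-style fact: since $r$ is an integer, the linear functional $c \mapsto \sum c_i \lambda_i$ is maximized at a vertex of the polytope, and every vertex is a $0/1$ vector whose support has size at most $r$; since the $\lambda_i$ are nonnegative and nondecreasing, the maximum is attained by putting $c_i = 1$ for $i \in \{n-r+1,\dots,n\}$ and $c_i = 0$ otherwise, giving exactly the right-hand side. One can also prove this directly by an exchange argument: if $c_i > 0$ for some $i \leq n-r$ and $c_j < 1$ for some $j \geq n-r+1$, moving mass $\min(c_i, 1-c_j)$ from $i$ to $j$ can only increase the objective since $\lambda_j \geq \lambda_i$.

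There is no real obstacle here; the only thing to be careful about is that $r$ being an integer is used to make the extreme points of the feasible polytope be $0/1$ vectors (otherwise one would have to allow a fractional coordinate). Putting the three displays together yields the claimed bound $A \bullet M \leq \sum_{i=n-r+1}^n \lambda_i(M)$.
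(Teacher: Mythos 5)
Your proof is correct, but it takes a genuinely different route from the paper. The paper's proof first invokes von Neumann's trace inequality (cited to Mirsky) to get $A\mprod M \le \sum_{i=1}^n \lambda_i(A)\lambda_i(M)$ with both spectra sorted in the same order, and then argues that under the constraints $\lambda_i(A)\in[0,1]$, $\sum_i\lambda_i(A)\le r$, this sum is maximized by putting weight $1$ on the $r$ largest $\lambda_i(M)$. You bypass von Neumann entirely: expanding the trace in an eigenbasis of $M$, you only need the diagonal entries $c_i=e_i^TA e_i$, and the hypotheses $0\preceq A\preceq I_n$ and $\tra(A)\le r$ give exactly the same constraints ($c_i\in[0,1]$, $\sum_i c_i\le r$) directly, after which the same linear-optimization/exchange step over the polytope $\{c\in[0,1]^n:\sum_i c_i\le r\}$ finishes the argument. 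The trade-off: the paper's version is a one-line reduction modulo a known (but nontrivial) matrix inequality, while yours is fully self-contained and elementary, needing only that quadratic forms of $A$ at unit vectors lie in $[0,1]$ and that trace is basis-independent; your remark that integrality of $r$ is what makes the vertices of the polytope $0/1$ vectors is also the right thing to flag, and it plays the same role as the corresponding step in the paper's maximization over eigenvalue profiles of $A$.
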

\begin{proof}
By von Neumann's inequality~\cite{Mirsky},
$
A\mprod M = \tra(AM) \leq \sum_{i=1}^n \lambda_i(A) \lambda_i(M)
$.
Since $\sum_{i=1}^n \lambda_i(A) \leq r$ and all $\lambda_i(A) \leq 1$, we can easily see that the above product achieves its maximum when the largest $r$ eigenvalues of $A$ are $1$ and the rest are $0$. In this case, we have,
$
A\bullet M \leq \sum_{i=1}^n \lambda_i(A) \lambda_i(M) = \sum_{i=n-r+1}^n \lambda_i(M)
$.
\end{proof}

As a corollary we get the following result.
\begin{cor}\label{cor:traceineq}
Let $X$, $M^*$ and $T$ be as in Theorem~\ref{thm:mainmatrix}. Then
for any positive semidefinite matrix $U$, we have
$U\mprod (M^* - X) \leq \sum_{i=n-T+1}^{n} \lambda_i(U)$.
\end{cor}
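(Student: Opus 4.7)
The plan is to derive this corollary as an immediate specialization of Lemma~\ref{lem:patrial_trace}, by taking the role of the matrix $A$ in that lemma to be $M^* - X$, the role of $M$ to be $U$, and the role of $r$ to be $T$. The entire task reduces to verifying the three hypotheses of Lemma~\ref{lem:patrial_trace} for this choice of $A$.

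First, I would check that $M^* - X$ is positive semidefinite. This is immediate from the setup of Theorem~\ref{thm:mainmatrix}, since
\[
M^* - X = \sum_{i=1}^m Y_i = \sum_{i=1}^m v_i v_i^T \succeq 0.
\]
Second, I need $M^* - X \preceq I_n$. Since $X \succeq 0$ by assumption, we have $M^* - X \preceq M^*$, and the assumption $\lambda_{\max}(M^*) \leq 1$ gives $M^* \preceq I_n$. Chaining these two inequalities yields the desired bound.

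Third, I need $\tra(M^* - X) \leq T$ with $T \in \mathbb{N}$. This is literally the definition $T = \lceil \tra(M^* - X) \rceil$ from the statement of Theorem~\ref{thm:mainmatrix}; rounding up an arbitrary nonnegative real to the nearest integer both produces a natural number and preserves the inequality.

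With all three hypotheses in hand, Lemma~\ref{lem:patrial_trace} applied to the pair $(M^* - X, U)$ yields $U \mprod (M^* - X) = (M^* - X) \mprod U \leq \sum_{i=n-T+1}^{n} \lambda_i(U)$, which is exactly the claim. There is essentially no obstacle here: the content of the corollary is just the observation that the residual $M^* - X$ arising in Theorem~\ref{thm:mainmatrix} automatically satisfies the structural hypotheses (positive semidefiniteness, norm bound by $I$, trace bound by $T$) required to invoke Lemma~\ref{lem:patrial_trace}.
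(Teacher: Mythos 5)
Your proof is correct and is exactly the argument the paper intends: the corollary is stated as an immediate consequence of Lemma~\ref{lem:patrial_trace} with $A = M^* - X$, $M = U$, $r = T$, and your verification of the three hypotheses (positive semidefiniteness of $\sum_i v_iv_i^T$, the bound $M^*-X \preceq M^* \preceq I$, and $\tra(M^*-X)\leq T$ by the definition of $T$ as a ceiling) is precisely what is needed. No issues.
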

\subsection{Barrier Shifts}\label{sec:barshift}
In this section, we analyze how we can update matrices $A^{(q)}$ and $B^{(q)}$, and
increment barriers $l$ and $r$ so that the upper and lower potentials do not increase.
Let us think of $\Phi^u(A)$ as a function of an $n^2$ dimensional vector (consisting of entries
of $A$). Then in the first approximation $\Phi^{u+\delta_U}(A + tY) \approx \Phi^{u+\delta_U}(A) + t Y\mprod U$, where $U$ is the gradient of $\Phi^{u+\delta_U}$ at $A$ ($U$ is an $n\times n$ matrix). Thus the potential function does not increase,
$\Phi^{u+\delta_U}(A + tY) \leq \Phi^u(A)$,  roughly when
$t Y\mprod \frac{U}{\Phi^{u}(A) - \Phi^{u+\delta_U}(A)} \leq 1$.
Similarly, $\Phi_{l+\delta_L}(B + tY) \leq \Phi_l(B)$, roughly when
$tY\mprod \frac{L}{\Phi_{l+\delta_L}(B) - \Phi_{l}(B)}\geq 1$, where $L$ is the gradient
of $\Phi_{l+\delta_L}$
at $B$. Following~\cite{BSS}, we make these statements precise (we need to
take into account lower order terms).
We define matrices $U_A$ and $L_B$,
\begin{align*}
U_A &= \frac{((u+\delta_U)I-A)^{-2}}{\Phi^u(A) -\Phi^{u+\delta_U}(A)}+((u+\delta_U)I-A)^{-1};\\
L_B &= \frac{(P_{S}(B-(l+\delta_L)I)P_{S})^{\dagger 2}}{\Phi_{l+\delta_L}(B) -\Phi_{l}(B)}-(P_{S}(B-(l+\delta_L)I)P_{S})^{\dagger}
\end{align*}
\begin{lemma}\label{upper_potential} (Upper Barrier Shift)
Suppose $\lambda_{\text{max}}(A)<u$ and $Y = v v^T$ is a rank-one update. If
$
U_A \mprod Y \leq \frac{1}{t}
$
then
$
\Phi^{u+\delta_U}(A+tY) \leq \Phi^u(A)
$
and
$
\lambda_{\mathrm{max}}(A+tY)<u+\delta_U
$.
\end{lemma}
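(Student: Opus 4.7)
The plan is to adapt the Batson--Spielman--Srivastava upper barrier shift to the setting where $\Phi^u$ sums only over the top $T$ eigenvalues of its argument rather than all $n$ of them. The main obstacle is that $L(A+tY)$ moves under the rank-one update, so the usual trick of taking the full trace of the Sherman--Morrison identity no longer produces the right quantity on the left-hand side. I would get around this by combining Sherman--Morrison with Ky Fan's subadditivity of the top-$k$ eigenvalue sum, which gives an inequality that does not require tracking where $L(A+tY)$ ends up.

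Concretely, assume for the moment that $\lambda_{\mathrm{max}}(A+tY) < u + \delta_U$ (to be verified at the end by a continuity argument). Writing $C = ((u+\delta_U)I - A)^{-1} \succ 0$, Lemma~\ref{lem:path:sminv} gives
\[
((u+\delta_U)I - (A+tY))^{-1} = C + D, \qquad D = \frac{t\,CYC}{1 - t\,Y\mprod C}.
\]
The scalar $1 - tY\mprod C$ is positive: from $tU_A\mprod Y \le 1$ and the decomposition $U_A = C + C^2/(\Phi^u(A) - \Phi^{u+\delta_U}(A))$ one obtains $1 - tY\mprod C \ge t\,(Y\mprod C^2)/(\Phi^u(A) - \Phi^{u+\delta_U}(A)) > 0$, so $D$ is a rank-one positive semidefinite matrix.

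Since the top-$T$ eigenvalue sum equals $\max_P \tra(PMP)$ over rank-$T$ orthogonal projections, it is subadditive on Hermitian matrices, so
\[
\Phi^{u+\delta_U}(A+tY) = \sum_{i=n-T+1}^n \lambda_i(C+D) \le \sum_{i=n-T+1}^n \lambda_i(C) + \tra(D) = \Phi^{u+\delta_U}(A) + \frac{t\,Y\mprod C^2}{1 - t\,Y\mprod C},
\]
using that the top-$T$ eigenvalues of $C$ correspond to the top-$T$ of $A$ via the monotone map $\lambda \mapsto ((u+\delta_U)-\lambda)^{-1}$, and that a rank-one PSD matrix has only one nonzero eigenvalue, equal to its trace. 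A short algebraic rearrangement then shows the right-hand side is $\le \Phi^u(A)$ exactly when $tY\mprod U_A \le 1$, which is the hypothesis.

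The remaining task is to justify the assumption $\lambda_{\mathrm{max}}(A+tY) < u+\delta_U$. The map $s \mapsto \lambda_{\mathrm{max}}(A+sY)$ is continuous and starts strictly below $u+\delta_U$ at $s=0$, so if it ever reached $u+\delta_U$ there would be a first $s^\ast \in (0,t]$ at which it did so. For every $s < s^\ast$ the Sherman--Morrison calculation above applies and yields $\Phi^{u+\delta_U}(A+sY) \le \Phi^u(A) < \infty$, but the term $1/((u+\delta_U) - \lambda_{\mathrm{max}}(A+sY))$ appearing in the sum defining $\Phi^{u+\delta_U}(A+sY)$ blows up as $s \uparrow s^\ast$, a contradiction. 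Hence no such $s^\ast$ exists, and both conclusions of the lemma hold at $s=t$.
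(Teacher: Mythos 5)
Your proof is correct, and it reaches the paper's key estimate by a genuinely different decomposition. The paper also starts from Sherman--Morrison (Lemma~\ref{lem:path:sminv}), but it expresses $\Phi^{u+\delta_U}(A+tY)$ as $\tra\bigl(P((u+\delta_U)I-A-tY)^{-1}P\bigr)$ with $P=P_{L(A+tY)}$ the projection onto the \emph{moving} top-$T$ eigenspace of the updated matrix, and then strips $P$ off the two Sherman--Morrison terms separately: the compressed resolvent term is compared to $\Phi^{u+\delta_U}(A)$ via the majorization statements (Lemma~\ref{Majorization}, Corollary~\ref{cor:Majorization}), and the compressed rank-one correction is bounded by its full trace. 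You never introduce $L(A+tY)$ at all: writing the updated resolvent as $C+D$ with $C=((u+\delta_U)I-A)^{-1}$ and $D$ a rank-one positive semidefinite correction, you invoke subadditivity of the sum of the $T$ largest eigenvalues, $\sum_{i=n-T+1}^n\lambda_i(C+D)\le\sum_{i=n-T+1}^n\lambda_i(C)+\tra D$, which follows from the same variational (Ky Fan) principle that underlies Lemma~\ref{Majorization}. Both routes land on the identical bound $\Phi^{u+\delta_U}(A+tY)\le\Phi^{u+\delta_U}(A)+\frac{t\,C^2\mprod Y}{1-t\,C\mprod Y}$, after which your rearrangement using $U_A\mprod Y\le 1/t$ and the continuity argument for $\lambda_{\mathrm{max}}$ coincide with the paper's (the paper only asserts the latter ``by continuity''). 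What your version buys is that it sidesteps the non-fixed subspace entirely --- precisely the complication the paper emphasizes --- and it makes explicit two points the paper leaves implicit: the positivity of the Sherman--Morrison denominator $1-t\,Y\mprod C$ (needed so that $D\succeq0$ and the rearrangement is legitimate), and the first-crossing argument establishing $\lambda_{\mathrm{max}}(A+tY)<u+\delta_U$. What the paper's route buys is formal parallelism with the lower-barrier Lemma~\ref{lower_potential}, where a projection onto a (there, fixed) subspace genuinely must appear in the potential.
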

\begin{proof}
Let $u' = u+\delta_U$ and $P = P_{L(A+tY)}$.  By the Sherman--Morrison formula (Lemma (\ref{lem:path:sminv})), we can write the updated potential as:
\begin{align*}
\Phi^{u+\delta_U}(A+tY)  &=
\tra P(u'I-A-t Y)^{-1}P
=\tra P\left((u'I-A)^{-1}+\frac{t(u'I-A)^{-1}Y(u'I-A)^{-1}}{1-t(u'I-A)^{-1}\mprod Y}\right)P\\
&=\tra P(u'I-A)^{-1}P+\tra \frac{tP(u'I-A)^{-1}Y(u'I-A)^{-1}P}{1-t(u'I-A)^{-1}\mprod Y}\\
&\leq \Phi^{u+\delta_U}(PAP) + \frac{t(u'I-A)^{-2}\mprod Y}{1-t(u'I-A)^{-1}\mprod Y}\\
&\leq\Phi^{u+\delta_U}(A)+\frac{t(u'I-A)^{-2}\mprod Y}{1-t(u'I-A)^{-1}\mprod Y}\\
&=\Phi^u(A) -(\Phi^u(A) -\Phi^{u+\delta_U}(A))+
\frac{(u'I-A)^{-2}\mprod Y}{1/t-(u'I-A)^{-1}\mprod Y}
\end{align*}
Here, we used Corollary~\ref{cor:Majorization} for the inequality on line 4.

Substituting $U_A \mprod Y \leq 1/t$ gives $\Phi^{u+\delta_U}(A+tY) \leq \Phi^u(A)$.
The statement about $\lambda_{\mathrm{max}}$ follows from continuity of eigenvalues.
\end{proof}
\begin{lemma}\label{lower_potential} (Lower Barrier Shift)
Suppose $\lambda_{\mathrm{min}}(\left.B\right|_S)>l+\delta_L$ and $Y = vv^T$ is a rank-one update. If
$
L_B \mprod Y \geq 1/t
$
then
$
\Phi_{l+\delta_L}(B+tY) \leq \Phi_l(B)
$
and
$
\lambda_{\mathrm{min}}(\left.(B+tY)\right|_S)>l+\delta_L
$.
\end{lemma}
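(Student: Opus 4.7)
My plan is to mirror the upper barrier shift argument but exploit that the subspace $S$ in the lower potential is \emph{fixed}, so that the rank-one update transfers cleanly to a rank-one update on $S$. First I would set $M_S = P_S(B-(l+\delta_L)I)P_S$, viewed as a linear map on the space $S$, and note that by the hypothesis $\lambda_{\mathrm{min}}(\left.B\right|_S)>l+\delta_L$ the restriction $\left.M_S\right|_S$ is strictly positive definite, so its pseudoinverse acts as a genuine inverse on $S$. The key observation is
$$P_S\bigl(B+tY-(l+\delta_L)I\bigr)P_S \;=\; M_S + t\,\bar Y,\qquad \bar Y := P_S Y P_S = (P_S v)(P_S v)^T,$$
so the update, when pushed into $S$, is again a rank-one PSD perturbation.

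Next I would apply the Sherman--Morrison formula (Lemma~\ref{lem:path:sminv}, or equivalently Lemma~\ref{lem:path:smpseudinv} for pseudoinverses) to obtain
$$(M_S+t\bar Y)^{\dagger} \;=\; M_S^{\dagger} \;-\; \frac{t\, M_S^{\dagger}\bar Y M_S^{\dagger}}{1+t\,M_S^{\dagger}\mprod \bar Y},$$
and then take the trace to get
$$\Phi_{l+\delta_L}(B+tY) \;=\; \Phi_{l+\delta_L}(B) \;-\; \frac{t\, M_S^{\dagger 2}\mprod \bar Y}{1+t\,M_S^{\dagger}\mprod \bar Y}.$$
Using that $M_S^{\dagger}$ and $M_S^{\dagger 2}$ vanish on $S^\perp$, the products with $\bar Y$ equal $M_S^{\dagger}\mprod Y$ and $M_S^{\dagger 2}\mprod Y$ respectively. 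The desired inequality $\Phi_{l+\delta_L}(B+tY)\le \Phi_l(B)$ then rearranges algebraically to
$$\frac{1}{t} \;\le\; \frac{M_S^{\dagger 2}\mprod Y}{\Phi_{l+\delta_L}(B)-\Phi_l(B)} \;-\; M_S^{\dagger}\mprod Y \;=\; L_B\mprod Y,$$
which is exactly the assumed condition. (Note the sign flip relative to the upper barrier case is what causes $L_B$ to have a \emph{minus} sign on its second term instead of a plus.)

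Finally, for the claim $\lambda_{\mathrm{min}}(\left.(B+tY)\right|_S) > l+\delta_L$: since $t\bar Y \succeq 0$, we have $\left.(B+tY)\right|_S \succeq \left.B\right|_S$, so by monotonicity of eigenvalues $\lambda_{\mathrm{min}}$ only increases. Equivalently, one can argue from continuity: the potential $\Phi_{l+\delta_L}(B+tY)$ stays bounded by $\Phi_l(B) < \infty$ along the update, so no eigenvalue of $\left.(B+tY)\right|_S$ can cross $l+\delta_L$. The main technical thing to be careful about is ensuring the Sherman--Morrison step is justified, i.e., that $M_S$ is invertible on $S$ throughout — this is precisely guaranteed by the hypothesis, so I expect the argument to be a short adaptation rather than a new difficulty.
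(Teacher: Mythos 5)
Your proposal is correct and follows essentially the same route as the paper: push the rank-one update into the fixed subspace $S$, apply the Sherman--Morrison formula for the pseudoinverse to $P_S(B-(l+\delta_L)I)P_S$, take traces and rearrange to recover the condition $L_B\mprod Y \geq 1/t$, and get the eigenvalue claim immediately from $tY\succeq 0$ together with the hypothesis $\lambda_{\mathrm{min}}(\left.B\right|_S)>l+\delta_L$. Your extra care in checking that $\im(M_S)=S$ (so the pseudoinverse lemma applies with $P=P_S$) and that the rearrangement divides by positive quantities only makes explicit what the paper leaves implicit.
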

\begin{proof}
We proceed as in the proof for the upper potential. Let $l' = l + \delta_L$ and $P=P_S$.
By the Sherman--Morrison formula for the pseudoinverse (Lemma~\ref{lem:path:smpseudinv}),
we have:
\begin{align*}
\Phi_{l+\delta_L}(B+tY)  &= \tra(P(B + tY-l'I)P)^{\dagger}= \tra(P(B-l'I)P+tPYP)^{\dagger}\\
&= \tra(P(B-l'I)P)^{\dagger}-\frac{t\tra((P(B-l'I)P)^\dagger Y (P(B-l'I)P)^\dagger)}{1+t(P(B-l'I)P)^{\dagger}\mprod  Y}\\
&=\Phi_l(B)+(\Phi_{l+\delta_L}(B)-\Phi_l(B))-\frac{t(P(B-l'I)P)^{\dagger 2}\mprod
Y}{1+t(P(B-l'I)P)^{\dagger}\mprod Y}
\end{align*}
Note that matrix  $U_A$ is positive semidefinite.
Rearranging shows that $\Phi_{l+\delta_L}(B + Y)  \leq \Phi_l(B)$ when $ L_A(\pi) \geq 1/t$. It is immediate that $\lambda_{\text{min}}(P_S(A+t\pi\pi^T)P_S)>l+\delta_L$ since $\lambda_{\text{min}}(P_SAP_S)>l+\delta_L$.
\end{proof}

Now we prove that we can choose $Y_i$ and $t$ so that conditions of both lemmas are satisfied.
\begin{lemma}\label{both_barriers}(Both Barriers)
If $\Phi^{u}(A) \leq \epsilon_U$ and $\Phi_{l}(B) \leq \epsilon_L$ and $\epsilon_U, \epsilon_L,\delta_U,\delta_L$ satisfy
$$
0\leq \frac{1}{\delta_U}+\epsilon_U + \max(N, T) \leq \frac{1}{\delta_L} - \epsilon_L,
$$
and $X$, $Y_i$, $cost_i$, $Z$, $T$ and $N$ as in Theorem~\ref{thm:mainmatrix},
$M^*-X$ is non-singular on $S$,
then there exists $i$ and positive $t$ for which
\begin{align}
L_B\mprod (ZY_iZ) &\geq 1/t \geq U_A \mprod Y_i, \text{ and} \label{bothOne}\\
cost_i \cdot t &\leq 1/\max(N,T).\label{bothTwo}
\end{align}
\end{lemma}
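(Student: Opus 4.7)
The plan is to adapt the Batson--Spielman--Srivastava averaging argument to our setting: I will show that $\sum_i \bigl[L_B \mprod (ZY_iZ) - U_A \mprod Y_i - cost_i\cdot\max(N,T)\bigr] \geq 0$, so that some summand is nonnegative. This yields an index $i$ with $L_B\mprod(ZY_iZ) \geq U_A\mprod Y_i + cost_i\max(N,T)$; since both $U_A\mprod Y_i$ and $cost_i\max(N,T)$ are nonnegative, we may pick $1/t$ in the interval $[\max(U_A\mprod Y_i,\,cost_i\max(N,T)),\; L_B\mprod(ZY_iZ)]$, which simultaneously satisfies \eqref{bothOne} and \eqref{bothTwo}.

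To evaluate the three pieces, I use $\sum_i cost_i = 1$ (so $\sum_i cost_i\max(N,T) = \max(N,T)$) and $\sum_i Y_i = M^*-X$. For the $U_A$ piece, Corollary~\ref{cor:traceineq} gives $U_A\mprod(M^*-X) \leq \sum_{j=n-T+1}^{n}\lambda_j(U_A)$; since $U_A$ is a function of $A$ they share an eigenbasis, and the spectral expansion $\lambda_j(U_A) = (u+\delta_U-\lambda_j(A))^{-2}/(\Phi^u(A)-\Phi^{u+\delta_U}(A)) + (u+\delta_U-\lambda_j(A))^{-1}$ combined with the telescoping bound $\Phi^u(A)-\Phi^{u+\delta_U}(A) \geq \delta_U\sum_j(u+\delta_U-\lambda_j(A))^{-2}$ (over the top $T$ eigenvalues) and $\Phi^{u+\delta_U}(A) \leq \Phi^u(A) \leq \epsilon_U$ yields $U_A\mprod(M^*-X) \leq 1/\delta_U + \epsilon_U$. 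For the $L_B$ piece, the nonsingularity of $M^*-X$ on $S$ combined with $Z=((P_S(M^*-X)P_S)^\dagger)^{1/2}$ yields $Z(M^*-X)Z = P_S$ by a direct spectral computation, whence $\sum_i L_B\mprod(ZY_iZ) = L_B\mprod P_S = \tra(L_B)$ since $L_B$ is supported on $S$.

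The delicate step is proving $\tra(L_B) \geq 1/\delta_L - \epsilon_L$. Writing $a_j = (\lambda_j(\left.B\right|_S)-l)^{-1}$ and $b_j = (\lambda_j(\left.B\right|_S)-l-\delta_L)^{-1}$, the identity $b_j - a_j = \delta_L\,a_j b_j$ gives $\Phi_{l+\delta_L}(B)-\Phi_l(B) = \delta_L\sum_j a_j b_j$, and the spectral expression $\tra(L_B) = (\sum_j b_j^2)/(\delta_L\sum_j a_j b_j) - \sum_j b_j$ reduces the target inequality, after clearing denominators, to $\sum_j a_j b_j^2 \geq \delta_L\,(\sum_j a_j b_j)^2$. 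By Cauchy--Schwarz, $(\sum_j a_j b_j)^2 \leq (\sum_j a_j)(\sum_j a_j b_j^2) = \Phi_l(B)\sum_j a_j b_j^2$, so it suffices that $\Phi_l(B) \leq 1/\delta_L$; this follows from $\Phi_l(B) \leq \epsilon_L$ and the fact that the hypothesis $1/\delta_L - \epsilon_L \geq 0$ forces $\epsilon_L \leq 1/\delta_L$. Combining, the full sum is at least $(1/\delta_L - \epsilon_L) - (1/\delta_U + \epsilon_U) - \max(N,T) \geq 0$ by the hypothesis. The main obstacle is exactly this Cauchy--Schwarz manipulation: the naive bound $\tra(L_B) \geq 1/\delta_L - \Phi_{l+\delta_L}(B)$ is insufficient because $\Phi_{l+\delta_L}(B)$ can be substantially larger than $\epsilon_L$, and the three terms of $\tra(L_B)$ must be handled jointly rather than separately.
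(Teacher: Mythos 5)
Your proposal is correct and follows essentially the same route as the paper: the same averaging argument over $i$ based on the two bounds $\sum_i U_A\mprod Y_i \leq 1/\delta_U+\epsilon_U$ (via Corollary~\ref{cor:traceineq} and the telescoping estimate) and $\sum_i L_B\mprod(ZY_iZ)=\tra(L_B)\geq 1/\delta_L-\epsilon_L$, together with $\sum_i cost_i=1$. The only difference is that you prove the lower-potential trace bound directly with the $a_j,b_j$ identity and Cauchy--Schwarz, whereas the paper simply cites Claim~3.6 of \cite{BSS} for that step.
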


We will use the following lemma
\begin{lemma} \label{boundsonUandL}
$
\sum_{i=1}^m U_A \mprod Y_i  \leq  \frac{1}{\delta_U} + \epsilon_U
$
and
$
\sum_{i=1}^m L_B \mprod (Z Y_i Z) \geq  \frac{1}{\delta_L} - \epsilon_L.
$
\end{lemma}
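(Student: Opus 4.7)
The plan is to collapse each sum into a single trace/Frobenius-inner-product expression and then estimate each separately. The theorem's hypothesis $X+\sum_i Y_i=M^*$ gives $\sum_i Y_i=M^*-X$, so $\sum_i U_A\mprod Y_i = U_A\mprod(M^*-X)$ and $\sum_i L_B\mprod(ZY_iZ)=L_B\mprod Z(M^*-X)Z$. For the upper bound, I first note that $U_A$ is positive semidefinite since both of its defining terms are PSD ($A\prec(u+\delta_U)I$ and $\Phi^u(A)-\Phi^{u+\delta_U}(A)>0$). Corollary~\ref{cor:traceineq} then yields $U_A\mprod(M^*-X)\leq\sum_{i=n-T+1}^n\lambda_i(U_A)$. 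Writing $u'=u+\delta_U$, the eigenvalues of $U_A$ equal $(u'-\mu_j)^{-2}/(\Phi^u(A)-\Phi^{u'}(A))+(u'-\mu_j)^{-1}$, a strictly increasing function of the eigenvalues $\mu_j$ of $A$, so the top-$T$ eigenvalues of $U_A$ come from the top-$T$ eigenvalues of $A$---exactly those participating in $\Phi^u$ and $\Phi^{u'}$. From the identity $\Phi^u(A)-\Phi^{u'}(A)=\delta_U\sum_{\text{top }T}1/((u-\mu_j)(u'-\mu_j))\geq\delta_U\sum_{\text{top }T}1/(u'-\mu_j)^2$, the first-term contribution is at most $1/\delta_U$ and the second is $\Phi^{u'}(A)\leq\Phi^u(A)\leq\epsilon_U$, totaling $1/\delta_U+\epsilon_U$.

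For the lower bound the crucial first step is showing $\sum_iZY_iZ=P_S$. Let $W=P_S(M^*-X)P_S$. The hypothesis that $M^*-X$ is non-singular on $S$ forces $W$ to have image exactly $S$, and $Z=W^{\dagger/2}$ is symmetric PSD with image $S$ and satisfies $Z=P_SZ=ZP_S$. Hence $Z(M^*-X)Z=ZWZ=W^{\dagger/2}WW^{\dagger/2}=P_S$. Since both terms in the definition of $L_B$ factor through $P_S$, we have $L_B=P_SL_BP_S$, and therefore $\sum_iL_B\mprod(ZY_iZ)=L_B\mprod P_S=\tra(L_B)$.

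It remains to lower bound $\tra(L_B)$. Let $l'=l+\delta_L$, $s_j=\lambda_j(B|_S)-l'$, $r_j=s_j+\delta_L$, and $D=P_S(B-l'I)P_S$. Then $\tra(D^\dagger)=\Phi_{l'}(B)$, $\tra(D^{\dagger 2})=\sum 1/s_j^2$, and $\Phi_{l'}(B)-\Phi_l(B)=\delta_L\sum 1/(s_jr_j)$, yielding $\tra(L_B)=(\sum 1/s_j^2)/(\Phi_{l'}(B)-\Phi_l(B))-\Phi_{l'}(B)$. Using the algebraic identity $1/s_j^2=1/(s_jr_j)+\delta_L/(s_j^2r_j)$ rewrites the first term as $1/\delta_L+(\sum 1/(s_j^2r_j))/(\sum 1/(s_jr_j))$. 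I will then apply Cauchy-Schwarz with $a_j=1/(s_j\sqrt{r_j})$ and $b_j=1/\sqrt{r_j}$ to obtain $(\sum 1/(s_jr_j))^2\leq\Phi_l(B)\sum 1/(s_j^2r_j)$, so $(\sum 1/(s_j^2r_j))/(\sum 1/(s_jr_j))\geq(\Phi_{l'}(B)-\Phi_l(B))/(\delta_L\Phi_l(B))$. The Both Barriers hypothesis $1/\delta_L\geq\epsilon_L$ forces $\delta_L\Phi_l(B)\leq 1$, so this ratio dominates $\Phi_{l'}(B)-\Phi_l(B)$, and telescoping gives $\tra(L_B)\geq 1/\delta_L-\Phi_l(B)\geq 1/\delta_L-\epsilon_L$.

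The main obstacle is the final Cauchy-Schwarz step: a direct computation only yields $\tra(L_B)\geq 1/\delta_L-\Phi_{l'}(B)$, which is too weak because $\Phi_{l'}(B)$ can exceed $\Phi_l(B)\leq\epsilon_L$. Bridging this gap to a bound in terms of $\Phi_l$ crucially uses the side constraint $\delta_L\epsilon_L\leq 1$ inherited from the Both Barriers hypothesis, which turns the Cauchy-Schwarz factor $1/(\delta_L\Phi_l)$ into something $\geq 1$ and allows us to replace $\Phi_{l'}$ by $\Phi_l$.
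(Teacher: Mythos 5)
Your proof is correct and follows the paper's argument: the upper bound is handled exactly as in the paper (collapse the sum to $U_A\mprod(M^*-X)$, apply Corollary~\ref{cor:traceineq}, and bound the two resulting top-$T$ eigenvalue sums by $1/\delta_U$ and $\Phi^{u+\delta_U}(A)\leq\epsilon_U$), and the lower bound likewise reduces to $\tra(L_B)$ via $Z(M^*-X)Z=P_S$. The only difference is that where the paper cites Claim~3.6 of \cite{BSS} for the final inequality $\tra(L_B)\geq 1/\delta_L-\epsilon_L$, you reprove it inline with the Cauchy--Schwarz argument (which is essentially BSS's own proof), and your identification of $Z(M^*-X)Z$ as $P_S$ is in fact a cleaner statement than the paper's ``$=L_B\mprod P$'' step.
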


\begin{proof}
1.
We use Corollary~\ref{cor:traceineq} to bound the Frobenius product of $Y_i$ with each of the two summands
in the definition of $U_A$ (note that they are positive semidefinite), we get
\begin{align*}
\sum_{i=1}^m U_A\mprod Y_i &= U_A \mprod \sum_{i=1}^m Y_i = U_A \mprod (M^*-X)\\
&=
\frac{((u+\delta_U)I-A)^{-2}}{{\Phi^u(A) -\Phi^{u+\delta_U}(A)}}\mprod (M^*-X)
+((u+\delta_U)I-A)^{-1}\mprod (M^*-X) \\
&\leq \sum_{i=n-T+1}^n \lambda_i\left(\frac{((u+\delta_U)I-A)^{-2}}{{\Phi^u(A) -\Phi^{u+\delta_U}(A)}}\right) +
\sum_{i=n-T+1}^n \lambda_i\left(((u+\delta_U)I-A)^{-1}\right)\\
&=
\frac{
\sum_{i=n-T+1}^n
\frac{1}{(u + \delta_U - \lambda_i(A))^2}}{{\Phi^u(A) -\Phi^{u+\delta_U}(A)}}
 +
\sum_{i=n-T+1}^n \frac{1}{(u+\delta_U)-\lambda_i(A)}
\end{align*}
Note that the first term is at most $1/\delta_U$, since
\begin{align*}
\sum_{i=n-T+1}^n
\frac{1}{(u + \delta_U - \lambda_i(A))^2} &\leq
\sum_{i=n-T+1}^n
\frac{1}{(u - \lambda_i(A))(u + \delta_U - \lambda_i(A))}\\
&=
\frac{1}{\delta_U}\sum_{i=n-T+1}^n
\left(\frac{1}{u  - \lambda_i(A)} -  \frac{1}{(u + \delta_U) - \lambda_i(A)}\right)
= \frac{{\Phi^u(A) -\Phi^{u+\delta_U}(A)}}{\delta_U}
\end{align*}
and the second term equals $\Phi^{u+\delta_U}(A)$. Thus
$\sum_{i=1}^m U_A\mprod Y_i
\leq \epsilon_U + 1/\delta_U$.

\noindent
2. Let $P$ be the projection on $\im(M^*-X)$. Since
$(M^*-X)$ is non-singular on $S$, $P P_S = P_S$. We have,
\begin{align*}
\sum_{i=1}^m L_B\mprod ZY_iZ &= L_B \mprod \sum_{i=1}^m ZY_iZ  = L_B \mprod Z(M^*-X)Z = L_B \mprod P
\\
&=\tra\left(\frac{(P_S(B-(l+\delta_L)I)P_S)^{\dagger 2}}{\Phi_{l+\delta_L}(B) -\Phi_{l}(B)}-(P_{S}(B-(l+\delta_L)I)P_{S})^{\dagger}\right)\\
\\
&= \frac{\sum_{i=1}^{k}\frac{1}{(\lambda_i(\left.B\right|_S)-(l+\delta_L))^2}}{
\Phi_{l+\delta_L}(B) -\Phi_{l}(B)}-\sum_{i=1}^{k}\frac{1}{\lambda_i(\left. B \right|_S)-(l+\delta_L)}\\
&\geq 1/\delta_L - \epsilon_L,
\end{align*}
where the last line follows from Claim~3.6 in \cite{BSS}.

\end{proof}

\begin{proof}(Of Lemma~\ref{both_barriers})
For the previous lemma, we get:
$\sum_{i=1}^m (U_A \mprod Y_i + \max(N,T)cost_i) \leq \frac{1}{\delta_U} + \epsilon_U + \max(N,T) \leq
L_B\mprod (ZY_iZ).$
Thus for some $i$, $U_A \mprod Y_i + \max(N,T) cost_i \leq L_B\mprod (ZY_iZ)$. Letting $t = (L_B\mprod (ZY_iZ))^{-1}$, we
satisfy (\ref{bothOne}) and (\ref{bothTwo}).
\end{proof}
\subsection{Proof of Theorem~\ref{thm:mainmatrix}}
\label{sec:thmmainproof}
Now we are ready to prove Theorem~\ref{thm:mainmatrix}. We assume that $M^*-X$
is non-singular on $S$ (which we can ensure by an arbitrary small pertrubation).

We start with $A^{(0)} = X$, $B^{(0)} = 0$ and all weights $w^{(0)}_i = 0$.
We define parameters as follows,
\begin{align*}
\delta_L &= 1/(2\max(N,T)),&
\epsilon_L &= 1/(4\delta_L),&
l_0 &= -4k\delta_L,\\
\delta_U &=4\delta_L,&
\epsilon_U &= 1/(4\delta_L),&
u_0 &= 4T\delta_L + 1,
\end{align*}
so as to satisfy conditions of Lemma~\ref{both_barriers},
$\Phi^u(A^{(0)}) = \Phi^u(X) = \sum_{i=1}^T \frac{1}{u_0 - \lambda_{n+1-i}(X)} \leq T/(u_0-1) =\epsilon_U$,
$\Phi_l (B^{(0)})= \sum_{i=1}^k \frac{1}{0 - l_0} = - k/l_0 = \epsilon_L$,
$1/\delta_U + \epsilon_U + \max(N,T) = \frac{3}{2} \max(N,T)= 1/\delta_L - \epsilon_L$.
Then we iteratively apply Lemma~\ref{both_barriers}. At iteration $q$, we
find an index $i$ and a positive $t$ such that
$L_{B^{(q)}}(ZY_iZ) \geq 1/t \geq U_{A^{(q)}}(Y_i)$, $cost_i \cdot t \leq 1/\max (N,T)$,
and increment the weight of matrix $Y_i$ by $t$: $w^{(q+1)}_i = w^{(q)}_i + t$; update $l = l + \delta_L$ and $u = u + \delta_U$. The total cost increases by at most $1/\max(N,T)$.
Finally, after $N$ iterations we obtain matrices $A^{(N)}$ and $B^{(N)}$ with
\begin{align*}
\lambda_{\mathrm{max}}(A^{(N)}) &\leq u_0 + N \delta_U = 2(N + T)/\max(N,T) + 1\equiv \theta_{\mathrm{max}}\\
\lambda_{\mathrm{min}}(\bigl.B^{(N)}\bigr|_S) &\geq l_0 + N \delta_L =
(N/2 - 2k)/\max(N,T) \equiv \theta_{\mathrm{min}}.
\end{align*}
Now consider an arbitrary unit vector $v$. Let $v = v_S + v_{S^\perp}$, where $v_S \in S$
and $v_{S^\perp} \perp S$.
Since $B^{(N)} \succeq \theta_{\mathrm{min}} P_S$ and $v_S\in S$,
\begin{align*}
v_S^TA^{(N)}v_S &= v_S^T(X + (P_S(M^* - X)P_S)^{1/2} B^{(N)}
(P_S(M^* - X)P_S)^{1/2}) v_S \\
&\geq v_S^T(X + (P_S(M^* - X)P_S)^{1/2}   \theta_{\mathrm{min}} P_S
(P_S(M^* - X)P_S)^{1/2}) v_S \\
&=
\theta_{\mathrm{min}} v_{S}^T  M^* v_{S} +
(1 - \theta_{\mathrm{min}}) v_{S}^T  X v_{S} \geq \theta_{\mathrm{min}}
\lambda_{\mathrm{min}}(M^*) \|v_{S}\|^2.
\end{align*}
On the other hand,
$v_{S^\perp}^T A^{(N)} v_{S^\perp} \leq \theta_{\mathrm{max}} \|v_{S\perp}\|$.
Thus from the triangle inequality for the norm induced by $A^{(N)}$, we get
$$(v^T A^{(N)} v)^{1/2} \geq  \theta_{\mathrm{min}}^{1/2}
\lambda_{\mathrm{min}}(M)^{1/2}\|v_{S}\| -
\theta_{\mathrm{max}}^{1/2} \|v_{S^\perp}\| \geq \theta_{\mathrm{min}}^{1/2}\lambda_{\mathrm{min}}(M)^{1/2} -
 (\theta_{\mathrm{max}}^{1/2} + \theta_{\mathrm{min}}^{1/2} \lambda_{\mathrm{min}}(M)^{1/2}) \|v_{S^\perp}\|.$$
On the other hand, since $S$ is an eigenspace of $X$ corresponding to $k$ smallest
eigenvalues,
$$(v^T A^{(N)} v)^{1/2}\geq (v^T X v)^{1/2} \geq (v_{S^\perp}^T X v_{S^\perp})^{1/2}
\geq {\lambda^*}^{1/2} \|v_{S^\perp}\|.
$$
One of the two bounds above for $(v^T A^{(N)} v)^{1/2}$ increases and the other decreases as $\|v_{S^\perp}\|$ increases. They are equal when
$\|v_{S^\perp}\| =
\frac{\theta_{\mathrm{min}}^{1/2}\lambda_{\mathrm{min}}(M^*)^{1/2}}{{\lambda^*}^{1/2}
 + \theta_{\mathrm{max}}^{1/2} + \theta_{\mathrm{min}}^{1/2}\lambda_{\mathrm{min}}(M^*)^{1/2}}$.
Therefore,
$(v^T A^{(N)} v)^{1/2} \geq
\frac{\theta_{\mathrm{min}}^{1/2}{\lambda^*}^{1/2}
\lambda_{\mathrm{min}}(M^*)^{1/2}}{{\lambda^*}^{1/2} + \theta_{\mathrm{max}}^{1/2} +
\theta_{\mathrm{min}}^{1/2} \lambda_{\mathrm{min}}(M^*)^{1/2}}$.
We conclude that
$$
\lambda_{\mathrm{min}}(A^{(N)}) = \min_{v:\|v\|=1} v^T A^{(N)} v \geq
\frac{\theta_{\mathrm{min}}\lambda^*\lambda_{\mathrm{min}}(M^*)}{
\left({\lambda^*}^{1/2}
 + \theta_{\mathrm{max}}^{1/2} + \theta_{\mathrm{min}}^{1/2}
 \lambda_{\mathrm{min}}(M^*)^{1/2}\right)^2}
.$$
Plugging in the values of parameters, we get the statement of the theorem for $M= A^{(N)}$.
The total cost is at most $N/\max(N, T) = \min(1, N/T)$.
\hfill\qed

\vspace{0.2cm}
\noindent
Finally, we prove Claim~\ref{cl:patch}.
\begin{proof}[Claim~\ref{cl:patch}]
Let $V = \im(\La_{G+W}) = \ker(\La_{G+W})^{\perp}$. Let $\La_e$ be the Laplacian of
the edge $e$. Define
\begin{align*}
X &= \Bigl.\left((\La_{G+W}^{\dagger})^{1/2}\La_G (\La_{G+W}^{\dagger})^{1/2}\right)\Bigl|_V,\\
Y_e &= w_e \Bigl.\left((\La_{G+W}^{\dagger})^{1/2}\La_e (\La_{G+W}^{\dagger})^{1/2}\right)\Bigl|_V,\\
cost_e &= w_e/\left(\sum\nolimits_{d\in E_W} w_d\right).
\end{align*}
Since $\La_G + \sum_{e\in E_W}^m w_e \La_e = \La_{G+W}$, we have $X + \sum_{e\in E_W} Y_e = I$.
By the definition of the $(k,T,\lambda^*)$-patch, $\tra(I-X) \leq T$ and
$\lambda^* \leq \lambda^{k+1}(X) $. We apply Theorem~\ref{thm:mainmatrix} to matrices $X$,
$Y_e$ and $M^* = I$. We obtain a set of weights $\rho_e$ --- supported on at most $N$
edges --- such that
$$
c_1\min(N/T,1) \cdot \lambda^*
\leq
\lambda_{\mathrm{min}}\left(X+ \sum\nolimits_{e\in E_W} \rho_e Y_e\right)
\leq
\lambda_{\mathrm{max}}\left(X+ \sum\nolimits_{e\in E_W} \rho_e Y_e\right)
\leq c_2,
$$
Let $\tilde w_e = \rho_e w_e$. Weights $\tilde w_i$ define subgraph $W_k$ with
at most $N$ edges. It follows that
$$c_1 \min(N/T,1)  \lambda^* \La_{G+W} \preceq \La_{G+W_k} \preceq c_2 \La_{G+W}.$$
The total weight of edges of $W_k$ is $\sum_{e\in E_W} \rho_e w_e =
(\sum_{e\in E_W} \rho_e cost_e) \sum_{d\in E_W} w_d \leq \min (1, N/T) \sum_{d\in E_W} w_d$.
\end{proof}

\section{Constructing Nearly-Optimal Ultrasparsifiers}
\label{sec:patch:ultra}

We now apply our subgraph sparsification to build ultrasparsifiers.
Recall that a weighted graph $U$ is a $(\kappa,k)$-ultrasparsifier
  of another graph  $G$ if  $U \preceq G \preceq \kappa \cdot U$ and
  $U$ has only $n-1+k$ edges, where $n$ is the number of vertices in $U$ and $G$.
The main result of this section is the following theorem.

\begin{theorem}\label{thm:patch:ultra}
For any integer $k>0$, every graph has an  $(\frac{n}{k}\log n\, \tilde{O}(\log \log n),k)$--ultrasparsifier.
\end{theorem}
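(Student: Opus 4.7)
The plan is to instantiate the subgraph sparsification machinery (Claim~\ref{cl:patch}) on a low-stretch spanning tree of $G$, exactly along the lines sketched in the introduction.

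\textbf{Setup.} I would start by taking $T$ to be a low-stretch spanning tree of $G$ \cite{ABN08} with total stretch $\tau := \mathrm{st}_T(G) = n\log n\,\tilde O(\log\log n)$. Two facts of Spielman--Woo~\cite{SW} are the workhorses: the identity $\tra(\La_G \La_T^\dagger) = \tau$, and the tail bound
$$
|\{i:\lambda_i(\La_G\La_T^{\dagger})>\lambda\}| \leq \tau/\lambda \qquad (\lambda > 0),
$$
which is just Markov applied to the nonzero spectrum. I would then set $W := (k/\tau)\,G$ and argue that $(T,W)$ is a $(k,k,1/2)$-patch in the sense of Definition~\ref{def:patch}.

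\textbf{Verifying the patch conditions.} The trace bound is almost automatic: since $\La_{T+W}\succeq \La_T$ and (assuming $G$ is connected) both have the same kernel, Loewner monotonicity of the pseudoinverse gives $\La_{T+W}^{\dagger}\preceq \La_T^{\dagger}$, and multiplying by the PSD matrix $\La_W$ and tracing yields
$$
\tra(\La_W\La_{T+W}^\dagger) \;\leq\; \tra(\La_W\La_T^\dagger) \;=\; (k/\tau)\,\tra(\La_G\La_T^\dagger) \;=\; k.
$$
For the eigenvalue condition, I would diagonalize in the generalized sense: if $\La_G u = \nu\,\La_T u$ with $u\perp\ker\La_T$, then $\La_{T+W}u = (1+(k/\tau)\nu)\,\La_T u$, so on $\ker(\La_{T+W})^{\perp}$ the eigenvalues of $\La_T\La_{T+W}^{\dagger}$ are precisely the numbers $1/(1+(k/\tau)\nu_i)$, where the $\nu_i$ range over the $n-1$ nonzero generalized eigenvalues of $(\La_G,\La_T)$. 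The $(k+1)$-th smallest of these reciprocals corresponds to the $(k+1)$-th largest $\nu_i$; by the tail bound at $\lambda=\tau/k$, at most $k$ of the $\nu_i$ exceed $\tau/k$, so $\lambda_{k+1}(\La_T\La_{T+W}^{\dagger})\geq 1/(1+1)=1/2$.

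\textbf{Reading off the ultrasparsifier.} Claim~\ref{cl:patch} applied with $N=8k$, $T=k$, $\lambda^*=1/2$ then produces $W_k$ supported on at most $8k$ edges of $E_W\subseteq E_G$ with $c_1\,\La_{T+W}\preceq \La_{T+W_k}\preceq c_2\,\La_{T+W}$ for absolute constants $c_1,c_2$. Since $\La_T\preceq\La_G$, and (assuming $k\leq\tau$, else the statement is trivial) $\La_W=(k/\tau)\La_G\preceq\La_G$, I get $\La_{T+W}\preceq 2\La_G$ and $\La_{T+W}\succeq (k/\tau)\La_G$. Combining with the patch-sparsifier bounds shows that $U := (2c_2)^{-1}(T+W_k)$ satisfies $U\preceq G\preceq O(\tau/k)\cdot U$, so $U$ is an $(O(\tau/k),\,8k)$-ultrasparsifier; relabelling $k\mapsto k/8$ and substituting $\tau=n\log n\,\tilde O(\log\log n)$ gives the claimed bound.

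The substantive obstacle has already been absorbed into the proof of Claim~\ref{cl:patch} itself; once that is in hand, the present argument is essentially a calibration showing that Spielman--Woo's stretch identity and its Markov-style tail bound supply exactly the two parameters ($T$ and $\lambda^*$) needed to verify the patch hypotheses. A minor point I would be careful about is the kernel bookkeeping: the eigenvalue indexing in Definition~\ref{def:patch} is for the pencil restricted to $\ker(\La_{T+W})^{\perp}$, so the off-by-one between ``$(k+1)$-th smallest reciprocal'' and ``$(k+1)$-th largest $\nu_i$'' must be verified on the $(n-1)$-dimensional image rather than on all of $\mathbb{R}^n$.
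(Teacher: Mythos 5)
Your proposal is correct and follows essentially the same route as the paper: low-stretch spanning tree, the Spielman--Woo trace identity and tail bound to verify that the scaled copy of $G$ is a $(k,O(k),\Theta(1))$-patch for $T$, and then Claim~\ref{cl:patch} plus a constant-factor calibration to read off the ultrasparsifier. Your phrasing of the eigenvalue condition via the generalized eigenvalues of the pencil $(\La_G,\La_T)$ is just a more explicit version of the paper's Rayleigh-quotient computation, and your kernel/indexing caveat matches how the paper handles the restriction to $\im(\La_{T+W})$.
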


Our basic idea to build a good ultrasparsifier $U$ is quite simple.
Without loss of generality, we can assume that $G$ is connected
  and has $O(n)$ edges.
Otherwise given a graph $G$,  we can first find a linear size
  sparsifier using \cite{BSS}, for each of its connected components,
  and build a good ultrasparsifier for each component.
Because $U$ is only $k$ edges aways from a tree,
  our construction starts with good tree $T$.
As it will be much more clear below, the quality of a tree
  is measured by its {\em stretch}, as introduced
  by Alon, Karp, Peleg and West \cite{AKPW}.

Suppose $T$ is a spanning tree of $G = (V,E,w)$.
For any edge $e\in E$,  let $e_1,\cdots, e_k \in F$
  be the edges on the unique path in $T$ connecting
  the endpoints of $e$.
The \textit{stretch} of $e$ w.r.t. $T$ is given
  by $\text{st}_T(e) =w(e)(\sum_{i=1}^k \frac{1}{w(e_i)})$.
The stretch of the graph $G$ with respect to $T$ is
  defined by $\text{st}_T(G) =\sum_{e\in E} \text{st}_T(e).$
Our construction will start with a spanning tree with the lowest
    possible stretch.
  By \cite{ABN08}, we can in polynomial time grow a spanning tree $T$ with
$$\text{st}_T(G) = O(n \log n \log \log n (\log \log \log n)^3).$$

\begin{remark}For the sake of simplicity of the presentation, we will show the construction of ultrasparsifiers with $\Theta(k)$ edges. We note that by choosing the appropriate constants, the number of edges can be made exactly $k$.
\end{remark}
Let $\kappa = c_1\cdot\text{st}_T(G)/k$  for a sufficiently large constant $c_1$.
Our job is to choose $\Theta(k)$ more weighted
  edges $\tilde{W}$ and set $U = T+\tilde{W}$ such that
  $ c_2 \cdot U \preceq G \preceq \kappa \cdot U$,
   for a constant $c_2$.
To this end, let $W = (1/(c_3\kappa)) \cdot G$, for some constant $c_3$.
Then, $G = c_3 \kappa \cdot W  \preceq c_3 \kappa \cdot (W+T) $.
Also, because  $T \preceq G$, we have
  $T+W \preceq (1+1/(c_3\kappa))G \preceq c_4 \cdot G$, for a constant
  $c_4$.
Therefore, if we can find a $\Theta(k)$--edge subgraph $\tilde{W}$ of
  $W$ such that $T +\tilde{W} \preceq  \Theta(1) \cdot (T+W)$, we can
  then build a $n-1 + \Theta(k)$ edge graph $U = T+\tilde{W}$
  satisfying
  $ c_2 \cdot U \preceq G \preceq \kappa \cdot U$
   (if we choose our constants $c_i$'s carefully).

To apply our subgraph sparsification results to construct $\tilde{W}$,
  we  use the following structure result of Spielman and Woo
  (\cite{SW}:  Theorem 2.1 and Corollary 2.2).

\begin{lemma}\label{lem:patch:SW1} (Theorem 2.1 in \cite{SW})
(1) $\text{Tr}({\La^\dagger_T}^{1/2}\La_G{\La^\dagger_T}^{1/2}) =\text{st}_T(G).$
(2)
For every $t>0$, the number of eigenvalues of
  ${\La^\dagger_T}^{1/2}\La_G{\La^\dagger_T}^{1/2}$ greater than $t$ is at most $\text{st}_T(G)/t$.
\end{lemma}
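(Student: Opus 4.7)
The plan is to prove both parts by direct computation, viewing the trace as a sum of effective resistances and then applying a Markov-type argument on the eigenvalues.

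For part (1), my first step will be to write $\La_G = \sum_{e\in E} w_e\, b_e b_e^T$, where $b_e = \mathbf{1}_u - \mathbf{1}_v$ is the signed indicator vector of the edge $e=(u,v)$. Using the cyclic property of the trace and the fact that $\La_T$ and $\La_G$ share the same image (the vectors orthogonal to $\mathbf{1}$, assuming $G$ is connected), I will rewrite
\[
\tra\bigl({\La_T^\dagger}^{1/2}\La_G{\La_T^\dagger}^{1/2}\bigr) = \tra(\La_T^\dagger \La_G) = \sum_{e\in E} w_e\, b_e^T \La_T^\dagger b_e.
\]
The quantity $b_e^T\La_T^\dagger b_e$ is the effective resistance $R_T(u,v)$ between the endpoints of $e$ in the electrical network corresponding to $T$. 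Since $T$ is a tree, current flows only along the unique $u$--$v$ path $e_1,\dots,e_k$, and a series composition of the resistors $1/w(e_i)$ gives $R_T(u,v) = \sum_i 1/w(e_i)$. Substituting back yields $\sum_{e} w_e \sum_i 1/w(e_i) = \sum_e \text{st}_T(e) = \text{st}_T(G)$, as required.

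For part (2), I will first observe that $M := {\La_T^\dagger}^{1/2}\La_G{\La_T^\dagger}^{1/2}$ is positive semidefinite: for any $x$, $x^T M x = y^T \La_G y \geq 0$ with $y = {\La_T^\dagger}^{1/2} x$. Therefore all its eigenvalues are nonnegative, and by part (1) they sum to $\text{st}_T(G)$. A one-line Markov argument then finishes the bound: if more than $\text{st}_T(G)/t$ eigenvalues exceeded $t$, the sum of eigenvalues would already exceed $\text{st}_T(G)$, contradicting part (1).

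There is no serious obstacle here; the only subtle point is the use of the pseudoinverse, which requires me to confirm that $\La_G$ and $\La_T$ share the same image so that $b_e \in \im(\La_T)$ for every $e\in E$ and the identity $\tra(\La_T^\dagger \La_G) = \tra(AB)$-style manipulations are valid. This is immediate once $G$ is assumed connected (which the application in Section~\ref{sec:patch:ultra} already assumes without loss of generality), since both kernels consist precisely of the constant vectors.
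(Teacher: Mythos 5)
Your proof is correct. Note that the paper itself gives no proof of this lemma --- it is quoted directly from Spielman and Woo \cite{SW} (their Theorem 2.1 and Corollary 2.2) --- and your argument is essentially the standard one from that source: writing $\tra(\La_T^\dagger\La_G)=\sum_e w_e\, b_e^T\La_T^\dagger b_e$, identifying $b_e^T\La_T^\dagger b_e$ with the series effective resistance along the tree path (valid since $b_e\perp\mathbf{1}=\ker\La_T$ for a spanning tree of a connected $G$), and then applying Markov's inequality to the nonnegative eigenvalues for part (2).
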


We now use Lemma~\ref{lem:patch:SW1} to prove the following lemma,
from which Theorem~\ref{thm:patch:ultra} follows directly.
\begin{lemma}\label{lem:ultra:patch}
   $W$ is a $(k,O(k),\Theta(1))$--patch for $T$.
\end{lemma}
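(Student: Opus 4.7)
}
The plan is to diagonalize everything through a single matrix and read off both patch conditions from the Spielman--Woo bounds in Lemma~\ref{lem:patch:SW1}. Restrict attention to the invariant subspace $\mathbf{1}^{\perp} = \im(\La_T) = \im(\La_G) = \im(\La_{T+W})$, on which all three Laplacians are invertible, so pseudoinverses equal inverses. Set $M = \La_T^{-1/2}\La_G\La_T^{-1/2}$; Lemma~\ref{lem:patch:SW1} gives $\tra(M) = \text{st}_T(G) = k\kappa/c_1$, and for every $t>0$ the number of eigenvalues of $M$ strictly exceeding $t$ is at most $\text{st}_T(G)/t$. Write $\mu^{(1)} \geq \mu^{(2)} \geq \dots \geq \mu^{(n-1)}$ for the eigenvalues of $M$ in decreasing order.

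The key algebraic step is the factorization
$$\La_{T+W} \;=\; \La_T + \tfrac{1}{c_3\kappa}\La_G \;=\; \La_T^{1/2}\bigl(I + \tfrac{1}{c_3\kappa}M\bigr)\La_T^{1/2},$$
on $\mathbf{1}^\perp$. It follows that the positive eigenvalues of $\La_T\La_{T+W}^{\dagger}$ are precisely $1/(1+\mu/(c_3\kappa))$ as $\mu$ ranges over eigenvalues of $M$, and that $\tra\bigl(\La_G\La_{T+W}^{\dagger}\bigr) = \tra\bigl(M(I+\tfrac{1}{c_3\kappa}M)^{-1}\bigr) = \sum_i \mu_i/(1+\mu_i/(c_3\kappa))$. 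Since the map $\mu\mapsto 1/(1+\mu/(c_3\kappa))$ reverses the ordering, the $(k{+}1)$-th smallest eigenvalue of $\La_T\La_{T+W}^{\dagger}$ (counting the single zero eigenvalue contributed by $\mathbf 1$) equals $1/(1+\mu^{(k)}/(c_3\kappa))$.

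To verify condition~1, I would apply the tail bound of Lemma~\ref{lem:patch:SW1}(2): if $\mu^{(k)} > t$ then at least $k$ eigenvalues of $M$ exceed $t$, forcing $k \leq \text{st}_T(G)/t$, so $\mu^{(k)} \leq \text{st}_T(G)/k = \kappa/c_1$. Hence
$$\lambda_{k+1}\bigl(\La_T\La_{T+W}^{\dagger}\bigr) \;=\; \frac{1}{1+\mu^{(k)}/(c_3\kappa)} \;\geq\; \frac{1}{1 + 1/(c_1c_3)} \;=\; \Theta(1),$$
which is the required $\lambda^\ast = \Theta(1)$. To verify condition~2, rewrite $\tra(\La_W\La_{T+W}^{\dagger}) = \frac{1}{c_3\kappa}\tra(\La_G\La_{T+W}^\dagger) = \sum_i \alpha_i/(1+\alpha_i)$ with $\alpha_i = \mu_i/(c_3\kappa)$, and bound each term by $\min(\alpha_i,1)$. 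Split the sum by the threshold $\mu_i = c_3\kappa$: the number of indices with $\mu_i > c_3\kappa$ is at most $\text{st}_T(G)/(c_3\kappa) = k/(c_1c_3)$, contributing at most $k/(c_1c_3)$ from the ``1'' side; the remaining indices contribute $\sum_{\mu_i \leq c_3\kappa}\mu_i/(c_3\kappa) \leq \tra(M)/(c_3\kappa) = k/(c_1c_3)$. Adding gives a total bound of $2k/(c_1c_3) = O(k)$, so $W$ is a $(k,O(k),\Theta(1))$-patch.

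Nothing here is especially subtle; the only care needed is to keep track of the single zero eigenvalue from $\mathbf 1 \in \ker\La_T$ so that $\lambda_{k+1}$ correctly addresses the $k$-th smallest positive eigenvalue of $\La_T\La_{T+W}^\dagger$, and then to split the trace sum at the right scale $\mu = c_3\kappa$ so that both the ``number of large eigenvalues'' half (tail bound) and the ``sum of small eigenvalues'' half (trace bound) come from the same Spielman--Woo estimate.
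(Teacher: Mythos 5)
Your proof is correct, and it rests on the same two Spielman--Woo facts (Lemma~\ref{lem:patch:SW1}) that the paper uses, though the mechanics differ in both halves. For condition~1 the paper works with Rayleigh quotients: writing $\lambda_i$ for the eigenvalues of $(\La_{T+W}^{\dagger})^{1/2}\La_T(\La_{T+W}^{\dagger})^{1/2}$, it derives the identity $x_i^T\La_G x_i / x_i^T \La_T x_i = c_3\kappa(1-\lambda_i)/\lambda_i$ along the corresponding eigenvectors and then invokes Courant--Fischer together with the tail bound to get $\lambda_{k+1}\geq c_1c_3/(1+c_1c_3)$; your simultaneous diagonalization $\La_{T+W}=\La_T^{1/2}\bigl(I+\tfrac{1}{c_3\kappa}M\bigr)\La_T^{1/2}$ makes the same eigenvalue correspondence $\lambda\mapsto 1/(1+\mu/(c_3\kappa))$ explicit and lands on the identical constant, with the zero eigenvalue from $\mathbf 1$ accounted for correctly. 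For condition~2 the paper is one step shorter: since $\La_{T+W}\succeq\La_T$ with the same kernel, $\La_{T+W}^{\dagger}\preceq\La_T^{\dagger}$, so $\tra(\La_W\La_{T+W}^{\dagger})\leq\tra(\La_W\La_T^{\dagger})=\text{st}_T(G)/(c_3\kappa)=k/(c_1c_3)$, using only the trace formula; your split of $\sum_i\alpha_i/(1+\alpha_i)$ at $\alpha_i=1$ uses both the trace formula and the tail bound and loses a factor of $2$, which is immaterial for an $O(k)$ bound. Both arguments are valid; yours buys an exact description of the spectrum of $\La_T\La_{T+W}^{\dagger}$, while the paper's monotonicity trick for the trace is the more economical route.
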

\begin{proof}
Let
$\lambda_i = \lambda_{i}((\La_{T+W}^{\dagger})^{1/2}\La_T(\La_{T+W}^{\dagger})^{1/2})$ be the $i$-th eigenvalue,
and $y_i$ be the corresponding eigenvector. Let $x_i = L_{T+W}^{1/2}y_i$. Then,
\begin{eqnarray*}
\lambda_i = \lambda_{i}((\La_{T+W}^{\dagger})^{1/2}\La_T(\La_{T+W}^{\dagger})^{1/2})
 = \frac{x_i^T \La_T x_i}{x_i^T \La_T x_i+  x_i^T \La_W x_i}
 =  \frac{x_i^T \La_T x_i}{x_i^T \La_T x_i+  x_i^T \La_G x_i/(c_3\kappa)},
\end{eqnarray*}
implying
\begin{eqnarray*}
 \frac{x_i^T \La_G x_i}{x_i^T \La_T x_i}  =  \frac{1-\lambda_i}{\lambda_i}
 c_3\kappa = \left(\frac{1-\lambda_i}{\lambda_i}\right) c_3 c_1 \frac{\text{st}_T(G)}{k}
= \frac{\text{st}_T(G)}{\frac{k}{c_1c_3}\frac{\lambda_i}{1-\lambda_i}}
\end{eqnarray*}
It follows from the definition of $\lambda_i$ that $0\leq \lambda_i <
1$.
Hence, $(1-\lambda_{i-1})/\lambda_{i-1} \geq (1-\lambda_{i})/\lambda_{i} $.
By Courant---Fischer theorem and the property 2 of Lemma \ref{lem:patch:SW1},
 we have $k \leq
 \frac{k}{c_1c_3}\frac{\lambda_{k+1}}{1-\lambda_{k+1}}.$
Therefore, $\lambda_{k+1} \geq \frac{c_1c_3}{1+c_1c_3} = \Theta(1)$. We also have,

\begin{eqnarray*}
\tra\left((\La^{\dagger}_{T+W})^{1/2}\La_W{(\La^{\dagger}_{T+W}})^{1/2}\right)
& \leq &
 \tra\left((\La^{\dagger}_{T})^{1/2}\La_W{(\La^{\dagger}_{T}})^{1/2}\right)
= \frac{1}{c_3\kappa}\tra\left((\La^{\dagger}_{T})^{1/2}\La_G{(\La^{\dagger}_{T}})^{1/2}\right)\\
 & \leq & \frac{k}{c_3c_1\text{st}_T(G)} \text{st}_T(G) =
 \frac{k}{c_3c_1} = \Theta(k).
\end{eqnarray*}

We proved that $W$ is a $(k,O(k),\Theta(1))$--patch for $T$.
\end{proof}

We next show that the parameters of the ultrasparsifiers we obtained
are optimal, up to low order terms.

\begin{theorem} Let $G$ be a Ramanujan $d$-regular expander graph, for
  some constant $d$. Let $U$ a $(\kappa,N)$ ultrasparsifier for $G$. Then
$
\kappa \geq \frac{n}{N} \log n.
$
\label{thm:optimsparsifiers}
\end{theorem}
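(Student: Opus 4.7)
The plan is to combine two ingredients: (i) a classical lower bound on the stretch of any bounded-degree spanning tree of a Ramanujan expander, and (ii) a rank-$N$ interlacing argument that limits how much adding $N$ edges can reduce the generalized spectrum.

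Since $G \preceq \kappa\, U$ we have $\ker \La_U \subseteq \ker \La_G = \mathrm{span}(\mathbf 1)$, so $U$ is connected. Fix any spanning tree $T \subseteq U$; since $U \subseteq G$ and $G$ is $d$-regular, $T$ has maximum degree at most $d$. Writing $U = T \cup F$ we have $|F| = N$, and $\La_U = \La_T + \La_F$ with $\mathrm{rank}(\La_F) \leq N$.

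Next, I establish that $\mathrm{st}_T(G) = \Omega(n \log n)$ for every spanning tree $T$ of $G$ with bounded degree. Writing $\mathrm{st}_T(G) = \sum_{e \in T}|E_G(S_e, \bar S_e)|$ for the partition induced by removing $e$: Ramanujan expansion gives $|E_G(S, \bar S)| = \Omega(d)\min(|S|, n-|S|)$ for every cut, and a standard isoperimetric counting for bounded-degree trees gives $\sum_{e \in T}\min(|S_e|, n-|S_e|) = \Omega(n \log_d n)$ (minimized, up to constants, by the balanced $d$-ary tree). For constant $d$ this gives the $\Omega(n\log n)$ bound.

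By Lemma~\ref{lem:patch:SW1}, the nonzero eigenvalues $\mu_i^T = \lambda_i((\La_T^{\dagger})^{1/2}\La_G(\La_T^{\dagger})^{1/2})$, indexed in increasing order, satisfy $\sum_{i=1}^{n-1}\mu_i^T = \mathrm{st}_T(G)$ and the tail bound $|\{i : \mu_i^T > t\}| \leq \mathrm{st}_T(G)/t$. The analogous quantities $\mu_i^U$ for $U$ all satisfy $\mu_i^U \leq \kappa$ by the ultrasparsifier condition. Applying Cauchy interlacing to the restriction of the generalized eigenvalue problem to $\ker \La_F$ --- a subspace of codimension at most $N$ on which $\La_T$ and $\La_U$ agree --- yields
\[
\mu_i^T \leq \mu_{i+N}^U \leq \kappa \qquad\text{for every } i \leq n - 1 - N.
\]

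Combining these with the tree stretch lower bound, write $\mathrm{st}_T(G) = \sum_{i=1}^{n-1-N}\mu_i^T + \sum_{k=1}^{N}\mu_{n-k}^T$. The first sum is at most $(n-1-N)\kappa$. To bound the top-$N$ sum I combine the Spielman--Woo tail estimate $\mu_{n-k}^T \leq \mathrm{st}_T(G)/k$ with the observation that the rank-$N$ interlacing, applied at every threshold $c\kappa$, forces at most $N$ eigenvalues to exceed that threshold. A dyadic/layer-cake estimate over thresholds then shows the top-$N$ contribution is at most $O(\kappa N \log(n/N))$, and substituting yields $\Omega(n\log n) \leq O(n\kappa)$, whence $\kappa = \Omega((n/N)\log n)$.

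The main technical obstacle is the last step: the naive combination of the trace identity and the interlacing alone gives only $\kappa = \Omega(\log n)$, and extracting the correct factor $n/N$ requires a careful multi-scale balance between the Spielman--Woo tail bound (which controls the magnitude of each large eigenvalue) and the rank-$N$ interlacing (which controls the number of eigenvalues exceeding any given threshold). This matches the upper bound of Theorem~\ref{thm:patch:ultra} up to the $\tilde O(\log\log n)$ factor, confirming that the construction is essentially optimal.
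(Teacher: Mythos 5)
Your setup is sound and overlaps with the paper's: taking a spanning tree $T$ of $U$ (so that $\text{st}_T(G)=\Omega(n\log n)$ by the expander stretch lower bound, which you re-derive correctly via $\text{st}_T(G)=\sum_{e\in T}|E_G(S_e,\bar S_e)|$ and bounded-degree tree isoperimetry), and using a rank-$N$ interlacing step, which is also the engine of the paper's argument. The genuine gap is the last step. First, the claimed bound $\sum_{k=1}^{N}\mu^T_{n-k}=O(\kappa N\log (n/N))$ does not follow from your two ingredients: the threshold observation only says that at most $N$ of the $\mu^T_i$ exceed $\kappa$, and the Spielman--Woo tail bound controls those eigenvalues only in terms of $\text{st}_T(G)$, not $\kappa$; nothing ties their magnitude to $\kappa$. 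Indeed the bound is false in general: if $T$ sits inside $U$ like a long path, then $\mu^T_{n-k}\approx n^2/k^2$, so the top $O(1)$ relative eigenvalues already carry a constant fraction of the entire stretch, and the trace decomposition $\text{st}_T(G)\le (n-1-N)\kappa+\sum_{k\le N}\mu^T_{n-k}$ degenerates. Second, even granting your bound, the inequality you display, $\Omega(n\log n)\le O(n\kappa)$, yields only $\kappa=\Omega(\log n)$, not $\Omega(\frac{n}{N}\log n)$; the factor $n/N$ simply cannot be extracted from the trace identity plus ``all but $N$ eigenvalues are at most $\kappa$,'' because a stretch of order $n\log n$ spread over $n$ eigenvalues is consistent with $\kappa=\Theta(\log n)$.

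The paper avoids this by counting how many eigenvalues of $\La_T$ itself are \emph{small}, rather than splitting the trace of the relative spectrum. Since $G$ is a constant-degree Ramanujan graph, $x^T\La_G x=\Theta(1)$ on unit vectors orthogonal to $\mathbf 1$, so $\text{st}_T(G)=\tra(\La_G\La_T^{\dagger})\ge Cn\log n$ becomes $\sum_i \frac{1}{x_i^T\La_T x_i}\ge C' n\log n$, and a Markov/pigeonhole argument produces an index $k$ with $\lambda_k(\La_T)\le C_1 k/(n\log n)$, i.e.\ the bottom of the tree spectrum is provably tiny in bulk. Then the rank-$N$ step is applied to the ordinary spectrum: adding $N=k-2$ edges gives $\lambda_2(\La_U)\le\lambda_{k}(\La_T)\le C_1k/(n\log n)$, and $G\preceq\kappa U$ with $\lambda_2(\La_G)=\Theta(1)$ forces $\kappa=\Omega(n\log n/N)$. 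If you want to salvage your route, you should replace the top-eigenvalue trace estimate with this kind of reciprocal/counting argument; as written, the ``multi-scale balance'' step has no proof and the arithmetic of the conclusion does not close.
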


\begin{proof} Let T be a low-stretch spanning tree of $G$, as above. As mentioned in \cite{ABN08}, $\text{st}_T(G) =\Omega( m\log n)$ where $m$ is the number of edges of the original graph.
From lemma \ref{lem:patch:SW1}, and the conditions on the stretch of $T$ we have
$
\text{Tr}(\La_G{\La_T}^\dagger) =\text{st}_T(G) \geq C\cdot n \log n
$
for some constant $C$.

Since $x^T\La_G x = \Theta(1)$ for the expander, the above inequality implies that $\sum_{i =1} ^n \frac{1}{x^T \La_T x} \geq n \log n$
where $x_i$ are the eigenvectors of $\La_G(\La_T)^\dagger$. It is immediate from Markov's inequality that there exists some $k$ such that ${x_k}^T \La_T x_k \leq \frac{C_1k}{n \log n}$. Assume that for all $i \leq k$ we have ${x_i}^T \La_T x_i \leq {x_k}^T \La_T x_k \leq \frac{C_1k}{n \log n} $. (Otherwise take $k' < k$ appropriately). Then also $\lambda_k(\La_T) \leq \frac{C_1k}{n \log n}$. By the minmax theorem for eigenvalues this implies that adding $N = k-2$ edges to $T$ will result to a graph $U$ with $\lambda_2(\La_U) \leq \lambda_k(\La_T)\leq \frac{C_1k}{n \log n}$. Thus any ultrasparsifier $U$ with $N$ edges will have
\begin{eqnarray*}
 C_2 = \lambda_2(\La_G) \leq \kappa \lambda_2(\La_U) \leq \frac{C_1k}{n \log n}
\Rightarrow \kappa \geq \Omega(\frac{n\log n}{k}) =\Omega(\frac{n\log n}{N})
\end{eqnarray*}
\end{proof}

\section{Maximizing Algebraic Connectivity by Adding few edges}
\label{sec:patch:algcon}

In this section, we present an approximation algorithm for the
following problem: given a graph $G = (V, E_{base})$, a set of
candidate edges $E_{cand}$, and a parameter $k$, add at most $k$
candidate edges to $G$ so as to maximize its algebraic connectivity,
that is, find a subset $E\subset E_{cand}$ that maximizes
$\lambda_2(\La_{G + E})$. The problem was introduced by Ghosh and
Boyd~\cite{BG}, who presented a heuristic for it. It is known that
the problem is NP-hard \cite{damon}. But prior to this work, no
approximation algorithm was known for it. 

We use two upper bounds for  the cost of the combinatorial solution
in order to prove an approximation guarantee: one upper bound is the
SDP value, $\lambda_{SDP}$, and the other is $\lambda_{k+2}(\La_G)$
(see Lemma~\ref{lem:boundOnLambdaK}). Note that neither of these two
bounds are good approximations for the value of the optimum solution
by themselves (for instance, if $G$ consists of $n$ isolated
vertices, $(V, E_{cand})$ is an expander, $k < n$, then the value of
the combinatorial solution is $0$ but $\lambda_{SDP} \sim k/n$), but
their combinations lead to a good upper bound for the optimum
solution $\lambda_{OPT}$.

For clarity and simplicity of exposition, we assume here that $(V,
E_{base})$ and $(V, E_{cand})$ are bounded degree graphs with the
maximum degree $\Delta$.
Our algorithm uses a natural semidefinite relaxation that was also
used by Ghosh and Boyd~\cite{BG}. We introduce a variable $w_e$ (the
weight of the edge $e$) for each candidate edge $e\in E_{cand}$; add
constraints that all edge weights are between $0$ and $1$, and the
total weight is at most $k$. Then we require that $\lambda_2(\La_G +
\sum_e w_e\La_e) \geq \lambda_{SDP}$ (where $\La_e$ is the Laplacian
of the edge $e$). We do that by adding an SDP constraint $\La_G +
\sum_e w_e \La_e \succeq \lambda_{SDP} P_{(1,\dots,1)^{\perp}}$,
where $P_{(1,\dots,1)^{\perp}}$ is the projection on the space
orthogonal to $(1,\dots,1)^{\perp}$.
We get the following SDP relaxation.
\begin{align*}
\text{maximize: } & \lambda_{SDP},\\
\text{subject to: }& \La_G + \sum_{e\in E_{cand}} w_e \La_e \succeq \lambda_{SDP}
\cdot P_{(1,\dots,1)^{\perp}},\\
&\sum_{e\in E_{cand}} w_e \leq k,\\
&0\leq w_e  \leq 1 \text{ for every } e\in E_{cand}.
\end{align*}

We solve the semidefinite program and obtain solution $\{w_e\}_{e\in
E_{cand}}$. The total weight of all edges is $k$, however, the
number of edges involved, or the support of the solution could be
significantly higher than $k$.

We use our algorithm to \textit{sparsify} the SDP solution using
Theorem~\ref{thm:mainmatrix}. More precisely, we apply
Theorem~\ref{thm:mainmatrix} with $X = \La_G/(4\Delta)$ and $Y_e =
w_i \La_e/(4\Delta)$ restricted to the space $(1,\dots, 1)^{\perp}$,
$N = 8k$, $T = \tra(\sum_e w_e \La_e)/(4\Delta) \leq k$ and $cost_i =
w_i$ (we divide $\La_G$ and $\La_e$ by $4\Delta$ to ensure that
$\lambda_{\mathrm{max}}(X + \sum_e Y_i) \leq 1$). We get a set of
weights $\rho_e$ supported on at most $8k$ edges s.t.
$$
\frac{1}{4\Delta}
\lambda_{\mathrm{2}} (\La_G + \sum_e \rho_e w_e \La_e)
= \lambda_{\mathrm{min}} (X + \sum_e \rho_e Y_e)
\geq  c \lambda_{k+2}(X) \lambda_{\mathrm{min}}(X + \sum_e Y_e)
\geq c \frac{1}{(4\Delta)^2} \lambda_{k+2}(\La_G) \lambda_{SDP}.
$$
That is, we obtain a combinatorial weighted solution $\tilde w_e = \rho_i w_i$
whose value is at least $c\lambda_{k+2}(\La_G)\lambda_{SDP}/(4\Delta)$
(if $k+2 > n$, the value is at least $c\lambda_{SDP}$).
We next show that
$\lambda_{SDP} \geq \lambda_{OPT}$ and $\lambda_{k+2} (G) \geq \lambda_{OPT}$.
Therefore, the value of the solution is at least $c\lambda_{OPT}^2/\Delta$.

\begin{lemma}\label{lem:boundOnLambdaK}
The value of the optimal solution, $\lambda_{OPT}$, is at most $\lambda_{k+2}(\La_G)$.
\end{lemma}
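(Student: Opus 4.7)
The plan is to exhibit a $2$-dimensional subspace on which $\La_{G+E}$ has Rayleigh quotient at most $\lambda_{k+2}(\La_G)$, and then invoke the min--max (Courant--Fischer) characterization of $\lambda_2$.

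Fix any feasible combinatorial solution, i.e.\ a set $E\subseteq E_{cand}$ with $|E|\le k$, and write $\La_{G+E}=\La_G+\La_E$, where $\La_E=\sum_{e\in E}\La_e$. The first key observation is that each single-edge Laplacian $\La_e$ has rank one, so $\La_E$ has rank at most $k$, and hence $\dim \ker\La_E\ge n-k$.

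The second ingredient is a well-chosen test subspace. Let $U\subseteq\mathbb{R}^V$ be the span of the eigenvectors of $\La_G$ corresponding to its $k+2$ smallest eigenvalues; then $\dim U=k+2$ and $x^T\La_G x\le \lambda_{k+2}(\La_G)\,\|x\|^2$ for every $x\in U$. By the standard dimension inequality,
\[
\dim(U\cap\ker\La_E)\ \ge\ \dim U+\dim\ker\La_E-n\ \ge\ (k+2)+(n-k)-n\ =\ 2.
\]
On this intersection, $\La_E x=0$, so
\[
x^T\La_{G+E}x\ =\ x^T\La_G x+x^T\La_E x\ =\ x^T\La_G x\ \le\ \lambda_{k+2}(\La_G)\,\|x\|^2.
\]

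Finally, by Courant--Fischer, $\lambda_2(\La_{G+E})=\min_{\dim W=2}\max_{0\ne x\in W} x^T\La_{G+E}x/\|x\|^2$, and choosing $W$ to be any $2$-dimensional subspace of $U\cap\ker\La_E$ yields $\lambda_2(\La_{G+E})\le\lambda_{k+2}(\La_G)$. Taking the maximum over all feasible $E$ gives $\lambda_{OPT}\le \lambda_{k+2}(\La_G)$. There is no real obstacle here: the only subtle point is the rank bound $\operatorname{rank}\La_E\le k$ (not $k+1$, which would weaken the result), and this uses that each $\La_e$ is rank one so that $\dim\ker\La_E\ge n-k$ exactly.
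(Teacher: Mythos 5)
Your proof is correct and follows essentially the same argument as the paper: bound $\operatorname{rank}(\La_E)\le k$, intersect $\ker\La_E$ with a low-eigenvalue eigenspace of $\La_G$ by dimension counting, and apply Courant--Fischer. The only cosmetic difference is that you take the bottom $k+2$ eigenvectors and a $2$-dimensional test subspace, whereas the paper takes the eigenvectors for $\lambda_2,\dots,\lambda_{k+2}$ (orthogonal to the all-ones vector) and a single test vector; both yield the same bound.
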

\begin{proof}
Consider the optimal solution $E$. Let $\La_E$ be the Laplacian of
the graph formed by $E$. Note that $\mathrm{rank}(\La_E) \leq |E| \leq k$, therefore,
$\dim \ker \La_E \geq n - k$.
Let $S$ be the $k+1$-dimensional space spanned by the eigenvectors of
$\La_G$ corresponding to $\lambda_2(\La_G), \dots, \lambda_{k+2}(\La_G)$. Since
$\dim S+ \dim \ker E > n$, spaces $S$ and $\ker \La_E$ have a non-trivial intersection.
Choose a unit vector $v\in \ker S\cap \La_E$. We have
$v(\La_{G}+\La_E)v^T \leq \lambda_{k+2}(\La_G) + 0 =\lambda_{k+2}(\La_G)$. Also $v$ is
orthogonal to the vector $(1,\dots,1)^\perp$. Therefore,
$\lambda_{OPT} = \lambda_2(\La_{G}+\La_E) \leq \lambda_{k+2}(\La_G)$.
\end{proof}

The edges in the support of $\tilde w_e$, $E = \{\tilde w_e: \tilde w_e \neq 0\}$,
form a non-weighted combinatorial solution. Since
$\lambda_{\mathrm{max}}(\La_X + \sum_e \tilde w_e \La_e) = O(\Delta)$,
all weights $\tilde w_e$ are bounded by $O(\Delta)$, and thus
the algebraic connectivity of $G+E$ is at least
$c\lambda_{k+2}(\La_G)\lambda_{SDP}/\Delta^2$.
\begin{theorem}\label{thm:algconn}
There is a polynomial time approximation algorithm that finds
a solution of value at least $c\lambda_{OPT}^2/\Delta$ supported on at most
$8k$ edges with total weight at most $k$. If $k\geq n$ the algorithm
finds a constant factor approximation.
\end{theorem}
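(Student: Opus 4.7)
My plan is to solve the natural semidefinite relaxation written above and then round the (possibly dense) fractional optimum into a weighted combination supported on at most $8k$ candidate edges using Theorem~\ref{thm:mainmatrix}. Since the SDP value $\lambda_{SDP}$ alone can be arbitrarily larger than $\lambda_{OPT}$ (as the isolated-vertices-plus-expander example shows), the key is to bound $\lambda_{OPT}$ simultaneously by two quantities --- $\lambda_{SDP}$ and $\lambda_{k+2}(\La_G)$ --- and to exploit both through the parameter $\lambda^*$ of the sparsification theorem.

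First I would collect the two easy upper bounds on $\lambda_{OPT}$. The inequality $\lambda_{SDP}\geq \lambda_{OPT}$ comes from plugging the $0/1$ indicator of any optimal edge set $E^*$ into the SDP. The inequality $\lambda_{OPT}\leq \lambda_{k+2}(\La_G)$ of Lemma~\ref{lem:boundOnLambdaK} is a dimension count: $\La_{E^*}$ has rank at most $k$, so $\ker \La_{E^*}$ meets the $(k+1)$-dimensional span of the eigenvectors of $\La_G$ for $\lambda_2,\dots,\lambda_{k+2}$; any unit vector $v$ in that intersection is orthogonal to $\onev$ and verifies $v^T(\La_G+\La_{E^*})v \leq \lambda_{k+2}(\La_G)$.

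The rounding is the main step. Given fractional weights $\{w_e\}$ from the SDP, I would invoke Theorem~\ref{thm:mainmatrix} on the invariant subspace $\onev^{\perp}$ with $X = \La_G/(4\Delta)$, $Y_e = w_e\La_e/(4\Delta)$, $cost_e = w_e/k$, and $N = 8k$. The normalization $1/(4\Delta)$ ensures $\lambda_{\mathrm{max}}(M^*)\leq 1$ where $M^* = X + \sum_e Y_e$; the total-weight constraint and $\tra\La_e = 2$ give $T = \lceil \tra(M^* - X)\rceil \leq k$, so $\min(N/T,1) = 1$; and $\lambda^* = \lambda_{k+1}(X|_{\onev^{\perp}}) = \lambda_{k+2}(\La_G)/(4\Delta)$. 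The theorem then returns weights $\rho_e$ supported on at most $8k$ edges with
\[
\lambda_{\mathrm{min}}\!\left(X + \textstyle\sum_e \rho_e Y_e\right) \;\geq\; c\,\lambda^*\,\lambda_{\mathrm{min}}(M^*) \;\geq\; \frac{c\,\lambda_{k+2}(\La_G)\,\lambda_{SDP}}{(4\Delta)^2},
\]
using SDP feasibility to lower bound $\lambda_{\mathrm{min}}(M^*) \geq \lambda_{SDP}/(4\Delta)$ on $\onev^{\perp}$. Unscaling by $4\Delta$ and invoking $\lambda_{OPT}^2 \leq \lambda_{SDP}\cdot \lambda_{k+2}(\La_G)$ then yields $\lambda_2(\La_G + \sum_e \tilde w_e\La_e) \geq c\lambda_{OPT}^2/\Delta$ for the weighted solution $\tilde w_e = \rho_e w_e$; the cost bound of Theorem~\ref{thm:mainmatrix} translates to $\sum_e \tilde w_e \leq k$.

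The main obstacle I foresee is fitting both upper bounds into a single application of Theorem~\ref{thm:mainmatrix}: one needs the hidden subspace $S$ of that theorem to align with the small eigenvectors of $\La_G$ (so that $\lambda^*$ becomes $\lambda_{k+2}(\La_G)/(4\Delta)$), and one needs the trace budget $T$ to remain $\leq N$ even after the $1/(4\Delta)$ rescaling. Both turn out to be automatic, because edge Laplacians have trace $2$ and the SDP enforces $\sum_e w_e \leq k$. For the case $k\geq n$, the bound through $\lambda_{k+2}(\La_G)$ is vacuous; there $\lambda^*$ can be replaced by $\Omega(1)$ and the algorithm returns a constant-factor approximation directly.
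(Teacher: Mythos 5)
Your proposal is correct and follows essentially the same route as the paper: the same SDP relaxation, the same two upper bounds $\lambda_{SDP}\geq\lambda_{OPT}$ and $\lambda_{k+2}(\La_G)\geq\lambda_{OPT}$ (the latter by the identical rank/dimension argument of Lemma~\ref{lem:boundOnLambdaK}), and the same rounding via Theorem~\ref{thm:mainmatrix} with $X=\La_G/(4\Delta)$, $Y_e=w_e\La_e/(4\Delta)$ on $\onev^{\perp}$ and $N=8k$, yielding value at least $c\,\lambda_{k+2}(\La_G)\lambda_{SDP}/\Delta \geq c\,\lambda_{OPT}^2/\Delta$ with support $8k$ and total weight at most $k$. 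Your only deviations (normalizing $cost_e=w_e/k$ so the costs sum to at most one, and the remark about the degenerate case $k\geq n$) are cosmetic and match the paper's intent.
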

We present two corollaries for special instances of the problem.
\begin{cor}
If it is possible to make $G$ an expander  by adding $k$ edges (and thus
$\lambda_{OPT} \sim \Delta$), then
the algorithm finds a constant factor approximation.
\end{cor}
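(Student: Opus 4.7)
The plan is to apply Theorem~\ref{thm:algconn} essentially as a black box and combine it with the hypothesized scaling $\lambda_{OPT} = \Theta(\Delta)$.

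First I would justify the parenthetical claim that if some subset of at most $k$ candidate edges turns $G$ into an expander $H$, then $\lambda_{OPT} = \Theta(\Delta)$. The lower bound $\lambda_{OPT} \geq \lambda_2(\La_H) = \Omega(\Delta)$ follows because $H$ has maximum degree $O(\Delta)$ (both $E_{base}$ and $E_{cand}$ have max degree $\Delta$) and constant edge expansion, so by Cheeger's inequality its algebraic connectivity is $\Omega(\Delta)$. The matching upper bound $\lambda_{OPT} = O(\Delta)$ is immediate from the classical estimate $\lambda_2 \leq \lambda_{\max}(\La) \leq 2\,\Delta_{\max}$ applied to any feasible $G + E$ with $E \subseteq E_{cand}$. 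Since $H$ itself is feasible and exhibits $\lambda_2(\La_H) = \Omega(\Delta)$, these bounds match.

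Second, I would substitute this scaling into the guarantee of Theorem~\ref{thm:algconn}, which already produces a combinatorial solution $E$ with $|E| \leq 8k$ and
$$
\lambda_2(\La_{G+E}) \;\geq\; c\,\frac{\lambda_{OPT}^2}{\Delta}.
$$
Under $\lambda_{OPT} = \Theta(\Delta)$ this rearranges to
$$
\lambda_2(\La_{G+E}) \;\geq\; c\,\frac{\lambda_{OPT}^2}{\Delta} \;=\; \Omega\!\left(\frac{\lambda_{OPT}\cdot\Delta}{\Delta}\right) \;=\; \Omega(\lambda_{OPT}),
$$
which is exactly the desired constant-factor approximation.

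No real obstacle is anticipated; the entire corollary is a direct consequence of Theorem~\ref{thm:algconn} once the expansion hypothesis is translated into the quantitative statement $\lambda_{OPT} = \Theta(\Delta)$. The only small point worth spelling out in a full write-up is the use of Cheeger's inequality (or a comparable expansion-to-spectral-gap bound) to certify $\lambda_2(\La_H) = \Omega(\Delta)$ for the witnessing expander, rather than merely $\Omega(1)$; this is standard but is what prevents the corollary from degrading to a $\Delta$-factor loss.
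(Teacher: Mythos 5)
Your main step is exactly the paper's (implicit) argument: the corollary is a direct substitution of $\lambda_{OPT}=\Theta(\Delta)$ into the guarantee $\lambda_2(\La_{G+E})\geq c\,\lambda_{OPT}^2/\Delta$ of Theorem~\ref{thm:algconn}, and that part of your write-up is fine; the upper bound $\lambda_{OPT}=O(\Delta)$ from the maximum degree is also correct.

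The flaw is in your justification of the lower bound. You claim that constant \emph{edge} expansion together with maximum degree $O(\Delta)$ yields $\lambda_2(\La_H)=\Omega(\Delta)$ via Cheeger's inequality, but Cheeger gives the opposite degree dependence: with isoperimetric constant $h$ one only gets $\lambda_2(\La_H)\geq h^2/(2\Delta)$, and with constant conductance one gets $\lambda_2(\La_H)=\Omega(d_{\min})$, not $\Omega(\Delta)$. Indeed the implication is false in general: attach $\Delta-3$ pendant vertices to one vertex of a $3$-regular expander; the resulting graph has constant edge expansion and maximum degree $\Delta$, yet $\lambda_2\leq \frac{n}{n-1}d_{\min}=O(1)\ll\Delta$. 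So the step ``expansion $\Rightarrow\lambda_2=\Omega(\Delta)$'' needs the witnessing graph to be (near-)$\Delta$-regular with constant conductance, i.e., a \emph{spectral} expander at degree $\sim\Delta$. The paper sidesteps this by building the quantitative statement into the hypothesis --- the parenthetical ``(and thus $\lambda_{OPT}\sim\Delta$)'' is precisely the reading of ``expander'' being used --- after which the corollary is immediate. Either adopt that reading, or strengthen your hypothesis on $H$ accordingly; as written, your Cheeger step would let the guarantee degrade by a factor of $\Delta$ (or worse), which is exactly the loss you say you are trying to avoid.
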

Note that if the graph formed by candidate edges is an expander then
the value of the following SDP solution $w_e = k/|E_{cand}|$ for each edge
$e\in E_{cand}$ is $\Omega(k/n)$, thus
$\lambda_{SDP} \geq ck/n$.
\begin{cor}
\label{cor:algconnexpander}
If the graph formed by candidate edges is an expander,
then the approximation algorithm from Theorem~\ref{thm:algconn}
finds a solution of value at least
$c \frac{k}{n\Delta}\lambda_{OPT}$.
\end{cor}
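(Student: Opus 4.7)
The plan is to combine two ingredients: a lower bound on $\lambda_{SDP}$ that exploits the expansion of the candidate graph, and the intermediate bound from the proof of Theorem~\ref{thm:algconn} (which is what the rounding procedure actually produces, before one applies the weaker substitutions $\lambda_{SDP}\geq \lambda_{OPT}$ and $\lambda_{k+2}(\La_G)\geq \lambda_{OPT}$).

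First, I would exhibit an explicit feasible SDP solution in order to lower bound $\lambda_{SDP}$. Setting $w_e = k/|E_{\mathrm{cand}}|$ on every candidate edge is feasible: since the candidate graph has maximum degree $\Delta$, we have $|E_{\mathrm{cand}}|\leq n\Delta/2$, so each $w_e$ is in $[0,1]$ as soon as $k\leq n\Delta/2$ (otherwise $k\geq n$ and Theorem~\ref{thm:algconn} already gives a constant factor approximation), and the total weight is exactly $k$. For this choice $\sum_e w_e\La_e=(k/|E_{\mathrm{cand}}|)\,\La_{E_{\mathrm{cand}}}$. Since $E_{\mathrm{cand}}$ is a bounded-degree expander, $\La_{E_{\mathrm{cand}}}\succeq \Omega(\Delta)\cdot P_{(1,\dots,1)^{\perp}}$, so
$$\La_G+\sum_e w_e \La_e \succeq \Omega\!\left(\tfrac{k\Delta}{|E_{\mathrm{cand}}|}\right) P_{(1,\dots,1)^{\perp}}=\Omega(k/n)\, P_{(1,\dots,1)^{\perp}},$$
yielding $\lambda_{SDP}=\Omega(k/n)$, exactly as claimed in the remark preceding the corollary.

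Second, I would revisit the chain of inequalities in the proof of Theorem~\ref{thm:algconn}, stopping \emph{before} the substitution $\lambda_{SDP}\geq\lambda_{OPT}$ is used. That chain already shows that the (unweighted) combinatorial solution $E=\{\tilde w_e:\tilde w_e\neq 0\}$ returned by the rounding has $\lambda_2(\La_{G+E})\geq c\,\lambda_{k+2}(\La_G)\,\lambda_{SDP}/\Delta^{O(1)}$, where the $\Delta$ exponent accounts for the $1/(4\Delta)$ rescaling used to fit into Theorem~\ref{thm:mainmatrix} and for rounding the weights $\tilde w_e$ to $\{0,1\}$. Applying Lemma~\ref{lem:boundOnLambdaK} in the form $\lambda_{k+2}(\La_G)\geq \lambda_{OPT}$ and plugging in the expander bound $\lambda_{SDP}\geq c\,k/n$ in place of the weaker $\lambda_{SDP}\geq\lambda_{OPT}$ gives $\lambda_2(\La_{G+E})\geq c'\,(k/(n\Delta))\,\lambda_{OPT}$, which is the advertised guarantee.

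The main obstacle is purely bookkeeping: one must keep careful track of the rescaling by $1/(4\Delta)$ in the application of Theorem~\ref{thm:mainmatrix}, of the distinction between the weighted solution $\{\rho_e w_e\}$ produced by the rounding and the unweighted solution $E$ obtained by taking the support (which costs the bound $\tilde w_e=O(\Delta)$ used in Theorem~\ref{thm:algconn}), and of which of the two upper bounds $\lambda_{SDP}$ and $\lambda_{k+2}(\La_G)$ is replaced by a stronger estimate. The only structurally new step beyond Theorem~\ref{thm:algconn} is the expander-based lower bound on $\lambda_{SDP}$; the rest is an unfolding of the existing argument.
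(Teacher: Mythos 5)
Your proposal is correct and follows essentially the same route as the paper: the remark preceding the corollary supplies exactly your expander-based bound $\lambda_{SDP}\geq ck/n$ via the uniform solution $w_e=k/|E_{cand}|$, and the corollary is then obtained by substituting this bound together with $\lambda_{k+2}(\La_G)\geq\lambda_{OPT}$ (Lemma~\ref{lem:boundOnLambdaK}) into the intermediate guarantee $c\,\lambda_{k+2}(\La_G)\,\lambda_{SDP}/\Delta$ from the proof of Theorem~\ref{thm:algconn}, just as you do. One small slip: feasibility of $w_e=k/|E_{cand}|$ needs $k\leq|E_{cand}|$ rather than $k\leq n\Delta/2$, but since a (connected) expander has at least $n-1$ candidate edges, the remaining case still forces $k\geq n$, where the constant-factor clause of Theorem~\ref{thm:algconn} applies.
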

\begin{remark}
It is possible to get rid of the dependence on $\Delta$ in Theorem~\ref{thm:algconn} and
Corollary~\ref{cor:algconnexpander} and obtain approximation guarantees
of $c\min(\lambda_{OPT}, \lambda_{OPT}^2)$ and
$\frac{c k}{n} \lambda_{OPT}$ respectively. We omit the details in this extended abstract.
\end{remark}

\bibliographystyle{alpha}

\end{document}